\theoremstyle{definition}
\newtheorem{theorem}{Theorem}
\newtheorem{cor}{Corollary}
\newtheorem{lem}{Lemma}
\newtheorem{remark}{Remark}
\journal{ }
\begin{document}


\begin{frontmatter}


\title{Flexible Validity Conditions for the Multivariate Mat\'ern Covariance in any Spatial Dimension and for any Number of Components}

\author[1,2]{Xavier Emery}
\address[1]{Department of Mining Engineering, University of Chile, Avenida Beauchef 850, Santiago, Chile.}
\address[2]{Advanced Mining Technology Center, University of Chile, Avenida Beauchef 850, Santiago, Chile.}
\author[3,4]{Emilio Porcu}
\address[3]{Department of Mathematics, Khalifa University of Science and Technology, Abu Dhabi, The United Arab Emirates.}
\address[4]{School of Computer Science and Statistics, Trinity College, Dublin, Ireland.}
\author[5]{Philip White}
\address[5]{Department of Statistics, Brigham Young University, Provo, UT, USA.}


\begin{abstract}

This paper addresses the problem of finding parametric constraints that ensure the validity of the multivariate Mat{\'e}rn covariance for modeling the spatial correlation structure of coregionalized variables defined in an Euclidean space. To date, much attention has been given to the bivariate setting, while the multivariate setting has been explored to a limited extent only. The existing conditions often imply severe restrictions on the upper bounds for the collocated correlation coefficients, which makes the multivariate Mat{\'e}rn model appealing for the case of weak spatial cross-dependence only. We provide a collection of sufficient validity conditions for the multivariate Mat{\'e}rn covariance that allows for more flexible parameterizations than those currently available, and prove that one can attain considerably higher upper bounds for the collocated correlation coefficients in comparison with our competitors. We conclude with an illustration on a trivariate geochemical data set and show that our enlarged parametric space yields better fitting performances.

\end{abstract}

\begin{keyword}
 Spatial cross-correlation \sep Mat{\'e}rn model \sep Multivariate covariance function \sep Vector random fields \sep Conditionally negative semidefinite matrices.
  \end{keyword}

\end{frontmatter}

\section{Introduction}

\subsection{Motivation and Context}

This work is motivated by highly-multivariate spatial modeling problems in geosciences and mining engineering applications. For instance, in exploration geochemistry, it is of interest to map the concentrations of several tens to hundreds of elements, with the object of detecting concealed mineral deposits \citep{castillo, Goovaerts, guartan2, guartan}.
In mineral resources evaluation, the grades of several elements of interest (main products, by-products, and contaminants) often need to be jointly predicted or simulated \citep{mery, Minniakhmetov}.
In geotechnics, the rock mass rating classification system is obtained as the combination of several basic parameters (uniaxial compressive strength, rock quality designation, spacing of discontinuities, condition of discontinuities, groundwater conditions, and orientation of discontinuities) \citep{pinheiro}.
In geometallurgy, one deals with information on mineral proportions, grain sizes, rock density, texture, indices of fragmentation, abrasion, crushing or grinding of the rock, mass recoveries, and metallurgical recoveries, leading to tens or hundreds of variables to be jointly modeled \citep{boisvert, boogaart}.
There has been considerable effort in the recent literature to find flexible multivariate spatial correlation models that overcome the drawbacks of the well-known linear model of coregionalization (LMC).
Yet, the LMC is often the default approach for multivariate spatial modeling within the above branches of applied sciences.

In the aforementioned applications, the coregionalized variables of interest are often viewed as realizations of a vector random field defined in an Euclidean space,  $\bm{Z}=\{\bm{Z}(\bm{s})=(Z_1(\bm{s}),\ldots,Z_p(\bm{s}))^\top: \bm{s}\in \mathbb{R}^d \}$, with $p$ standing for the number of variables and $d$ for the space dimension.
$\bm{Z}$ is second-order stationary if its expected value exists and is a constant vector in $\mathbb{R}^d$, and if the covariance between $\bm{Z}(\boldsymbol{s})$ and $\bm{Z}(\boldsymbol{s}^\prime)$ depends exclusively on $\boldsymbol{h}=\boldsymbol{s}-\boldsymbol{s}^\prime$. This stationarity assumption is common in order to facilitate inference on the spatial correlation structure of the random field from a set of scattered data.
A stationary covariance function  ${\bf K}: \mathbb{R}^d \rightarrow \mathbb{R}^{p \times p}$ is a positive semidefinite matrix-valued mapping whose elements are defined as ${K}_{ij}(\bm{h}) = \text{cov}\{Z_i(\bm{s}),Z_j(\bm{s}+\bm{h})\}$, $i,j=1,\cdots,p$, and $\bm{s}, \bm{h} \in \mathbb{R}^d$.
Showing positive semidefiniteness of any candidate ${\bf K}$ is often a challenging task, see, for instance, \cite{Wackernagel:2003}, \cite{Chiles2012}, \cite{marcotte2019} and references therein.

\cite{Gneiting:Kleibler:Schlather:2010} introduced the multivariate Mat{\'e}rn coregionalization model, for which each direct or cross-covariance is a univariate Mat{\'e}rn covariance function. The latter univariate covariance has been extremely popular in spatial statistics because it possesses a smoothness parameter that allow parameterizing the short-scale regularity of the underlying random field, and because it has a simple form for the related spectral density, hence it is an appealing model from the perspective of spectral modeling and simulation \citep{Lantu2002, emery2016, lauzon}.
The main focus in \cite{Gneiting:Kleibler:Schlather:2010} is on necessary and sufficient validity conditions ({\em i.e.}, positive semidefiniteness) on the parameters indexing the multivariate covariance function for the bivariate case ($p=2$). \cite{Apanasovich} provided sufficient validity conditions for the $p$-variate Mat{\'e}rn model with $p>2$.

\subsection{Pros and Cons of Multivariate Mat{\'e}rn Modeling: State of the Art}
\label{prosandcons}

The multivariate Mat{\'e}rn model is a breakthrough in the literature where the LMC has been central for decades \citep{Journel, Goulard, emery2010, Bailey2012, Desassis2013}. The LMC represents any component of a $p$-variate random field $\boldsymbol{Z}$ as a linear combination of latent, uncorrelated scalar fields.  \cite{Gneiting:Kleibler:Schlather:2010} and \cite{daley2015} advocate for different modeling strategies because of the potential drawbacks of the LMC: the smoothness of any component of the $p$-variate field amounts to that of the roughest latent scalar field.  Moreover, the number of parameters can quickly become huge as the number $p$ of components increases. \\
The multivariate Mat{\'e}rn model is more flexible in that it allows for different smoothness for the components of the $p$-variate random field. Yet, such flexibility is limited by restrictions to the parameter space that ensure ${\bf K}$ is positive semidefinite. Notably, in the bivariate case, one of the most important restrictions amounts to an upper bound for the absolute value of the collocated correlation coefficient, $\rho_{12}$.
For $p>2$ the commensurate restrictions on the $p(p-1)/2$ collocated correlation coefficients become overly severe. \\
While the bivariate case has been characterized completely by \cite{Gneiting:Kleibler:Schlather:2010}, only sufficient conditions are available for the case $p>2$ \citep{Gneiting:Kleibler:Schlather:2010, Apanasovich, Du2012}. Further, \cite{Gneiting:Kleibler:Schlather:2010} and \cite{Du2012} work under the very restrictive assumption that all the marginal and cross scale parameters are equal. \cite{Apanasovich} improve on these conditions by providing new constraints that are dependent on the dimension $d$ of the space where the random field is defined.
Yet, these sets of conditions still imply severe restrictions in terms of upper bound for the  $p(p-1)/2$ collocated correlation coefficients. This can result in an unrealistic model that works under the assumption of quasi-independence between the components of the $p$-variate random field, as such upper bounds are often close to zero. We explore this aspect in Section~\ref{sec5}.

\subsection{Our Contribution}

The objective of this paper is to improve the definition of the validity region for the parameters of the multivariate Mat\'ern covariance model, by proposing new sufficient conditions that are broader or less restrictive than the currently known ones. We do not address any modification in the model itself or its use in estimation, prediction or simulation studies, for which we refer to the existing literature \citep{Gneiting:Kleibler:Schlather:2010, Apanasovich}. In particular, examining the model complexity, comparing estimation approaches or analyzing the impact of the model parameters in prediction outputs are out of the scope of this work. The interest is to find novel conditions under which the $p$-variate Mat\'ern model is positive semidefinite in $\mathbb{R}^d$ for any given spatial dimension $d$ and number of random field components $p$.

Table~\ref{resume} gives a comparative sketch of \cite{Gneiting:Kleibler:Schlather:2010},  \cite{Apanasovich}, \cite{Du2012}, and the findings provided in this paper. The table omits the case $p=2$, insofar as \cite{Gneiting:Kleibler:Schlather:2010} identified necessary and sufficient conditions for this case. For the case $p>2$, we give several sufficient conditions for the validity of the multivariate Mat{\'e}rn model, with some specific cases (last column), as well as more general cases (second and third columns). \\

\begin{landscape}
\tiny{
\newcolumntype{M}[1]{>{\centering\arraybackslash}m{#1}}
\begin{table}[ht]
  \centering
 \scalebox{0.8}{
    \begin{tabular}{p{20mm}p{55mm}p{75mm}p{70mm}} \toprule
        { Contribution by}   & {General conditions depending on a matrix-valued hyperparameter} $\boldsymbol{\psi}$   & { General conditions depending on} ($\boldsymbol{\nu},\boldsymbol{\alpha},\boldsymbol{\sigma}$) {and, possibly, a scalar hyperparameter} $\beta>0$ or $\delta \geq 0$ & { Specific conditions depending on} ($\boldsymbol{\nu},\boldsymbol{\alpha},\boldsymbol{\sigma}$) { with either } $\boldsymbol{\nu}=\nu \boldsymbol{1}$ { and } $\nu>0${, or } $\boldsymbol{\alpha}=\alpha \boldsymbol{1}$ { and } $\alpha>0$  \\ \midrule
       This paper $\qquad \qquad \qquad$  &   &  \begin{minipage}{0.9\textwidth} {\small
       Theorem~\ref{validitymatern20}A\\
       $\boldsymbol{\nu} \in {\cal B}(\boldsymbol{s}_1,\ldots,\boldsymbol{s}_p)$ \\ $\boldsymbol{\alpha}^{-2} \in{\cal B}(\boldsymbol{s}_1,\ldots,\boldsymbol{s}_p)$ \\ $\boldsymbol{\sigma} \, \boldsymbol{\alpha}^{-d} \, \Gamma(\boldsymbol{\nu}+d/2)/\Gamma(\boldsymbol{\nu}) \ge 0$.}
       \end{minipage} &  \\ \cmidrule{2-4}

       & \begin{minipage}{0.9\textwidth} {\small
       Theorem~\ref{validitymatern20}B\\
       $\boldsymbol{\psi} \le 0$ \\ $\boldsymbol{\nu} \le 0$\\
       $\boldsymbol{\alpha}^2 \, \boldsymbol{\psi} -\boldsymbol{\nu} \le 0  $\\
        $\boldsymbol{\sigma} \,  \boldsymbol{\psi}^{\boldsymbol{\nu} +d/2} \, \boldsymbol{\alpha}^{2 \boldsymbol{\nu} } \exp \left ( - \boldsymbol{\nu} \right  )/\Gamma(\boldsymbol{\nu} ) \ge 0   $.}
       \end{minipage} &  \begin{minipage}{0.9\textwidth} {\small
       Theorem~\ref{validityMatern}A\\
       $\boldsymbol{\nu} \le 0 $ \\ $\boldsymbol{\nu} \, \boldsymbol{\alpha}^{-2}  \le 0$ \\
       $\boldsymbol{\sigma} \, \boldsymbol{\alpha}^{-d} \, \boldsymbol{\nu}^{\boldsymbol{\nu}+d/2} \exp(-\boldsymbol{\nu})/\Gamma(\boldsymbol{\nu} ) \ge 0$.}
       \end{minipage} & \begin{minipage}{0.9\textwidth} {\small
       Example 1\\
       $\boldsymbol{\nu} =\nu \boldsymbol{1}$ \\ $\boldsymbol{\alpha}^{-2} \le 0$ \\ $\boldsymbol{\sigma} \, \boldsymbol{\alpha}^{-d}  \ge 0$.}
       \end{minipage} \\ \cmidrule{4-4}
         &      &   &  \begin{minipage}{0.9\textwidth} {\small
       Example 2\\
       $\boldsymbol{\alpha} =\alpha \boldsymbol{1}$ \\ $\boldsymbol{\nu} \le 0$ \\ $\boldsymbol{\sigma} \, \boldsymbol{\nu}^{\boldsymbol{\nu}+d/2} \exp(-\boldsymbol{\nu})/\Gamma(\boldsymbol{\nu} ) \ge 0$.}
       \end{minipage} \\ \cmidrule(l){3-4}
               & &  \begin{minipage}{0.9\textwidth} {\small
       Theorem~\ref{validityMatern}B\\
       $\boldsymbol{\nu} \le 0 $ \\ $\boldsymbol{\alpha}^{2} - \beta \boldsymbol{\nu} \le 0$ \\ $\boldsymbol{\sigma} \left ( \boldsymbol{\alpha}^2/\beta \right )^{\boldsymbol{\nu}} \exp(-\boldsymbol{\nu}) /\Gamma(\boldsymbol{\nu} ) \ge 0$.}
       \end{minipage} & \begin{minipage}{0.9\textwidth} {\small
       Example 3\\
       $\boldsymbol{\nu} =\nu \boldsymbol{1}$ \\ $\boldsymbol{\alpha}^{2} \le 0$ \\ $\boldsymbol{\sigma} \, \boldsymbol{\alpha}^{2 \boldsymbol{\nu} }  \ge 0$.}
       \end{minipage} \\ \cmidrule(l){2-4}
                & &  &  \begin{minipage}{0.9\textwidth} {\small
       Theorem~\ref{validityexponentialdD}\\
       $\boldsymbol{\nu} =\nu \boldsymbol{1}$ \\ $\boldsymbol{\alpha} \le 0$ \\ $\boldsymbol{\sigma} \, \boldsymbol{\alpha}^{\lfloor (d+1+3 \lceil 2 \nu  \rceil)/2 \rfloor}  \ge 0$.}
       \end{minipage} \\ \cmidrule(l){1-4}
        {\bf \cite{Du2012}}         & &  &  \begin{minipage}{0.9\textwidth} {\small
       $\boldsymbol{\alpha} =\alpha \boldsymbol{1}$ \\ $\boldsymbol{\nu} \le 0$ \\ $\boldsymbol{\sigma} \, \Gamma(\boldsymbol{\nu}+d/2)/\Gamma(\boldsymbol{\nu})  \ge 0$.}
       \end{minipage} \\ \cmidrule(l){1-4}
      {\bf \cite{Apanasovich}} &  & \begin{minipage}{0.9\textwidth} {\small
       $\nu_{ij}-\frac{\nu_{ii}+\nu_{jj}}{2} = \delta (1-a_{ij})$ {with} $a_{ii}=1, a_{ij}\in [0,1]$ \\
       $[a_{ij}]_{i,j} \ge 0$ \\
       $\boldsymbol{\alpha}^2 \le 0$ \\ $\left[\frac{\sigma_{ij} \, {\alpha_{ij}}^{2 \delta + \nu_{ii} +\nu_{jj} } \, \Gamma(\nu_{ij}+d/2)} {\Gamma(\nu_{ii}/2+\nu_{jj}/2+d/2) \, \Gamma(\nu_{ij})}\right]_{i,j} \ge 0$.}
       \end{minipage} &  \\ \cmidrule(l){1-4}
       {\bf \cite{Gneiting:Kleibler:Schlather:2010} }       &  &  & \begin{minipage}{0.9\textwidth} {\small
       $\nu_{ij}=(\nu_{ii}+\nu_{jj})/2 $ \\ $\boldsymbol{\alpha} = \alpha \boldsymbol{1}$ \\ $\left[\frac{\sigma_{ij} \, \Gamma(\nu_{ii})^{1/2} \, \Gamma(\nu_{jj})^{1/2} \, \Gamma((\nu_{ii}+\nu_{jj}+d)/2)}   {\Gamma(\nu_{ii}+d/2)^{1/2} \, \Gamma(\nu_{jj}+d/2)^{1/2} \, \Gamma((\nu_{ii}+\nu_{jj})/2)} \right]_{i,j} \ge 0$.}
       \end{minipage} \\ \bottomrule
    \end{tabular}}
    \caption{A map of the results contained in this paper, with comparison with earlier literature. Here, $\le 0$ stands for {\em is conditionally negative semidefinite}, $\ge 0$ for {\em is positive semidefinite} and $\in {\cal B}(\boldsymbol{s}_1,\ldots,\boldsymbol{s}_p)$ for {\em is a Bernstein matrix with supporting points $\boldsymbol{s}_1,\ldots,\boldsymbol{s}_p$}. All the matrix operations (product, division, inverse, power and exponential) are taken element-wise. The other symbols are clarified in the paper. \label{resume} }
\end{table}
}
\end{landscape}

The remainder of the paper is as follows: Section~\ref{sec2} contains the necessary mathematical background. Section~\ref{sec3} presents new conditions for parsimonious and general parameterizations of the multivariate Mat\'ern model.
Section~\ref{sec5} provides a comparison with \cite{Apanasovich} of the upper bounds for the collocated correlation coefficient in three specific examples. Section~\ref{sec6} illustrates our findings through a geochemical data set of three coregionalized variables. Section~\ref{sec7} concludes the paper with a short discussion. The proofs of the technical results are deferred to Appendices~\ref{app1} to~\ref{app3}, where we also include some other background material and lemmas.

\section{Background} \label{sec2}
\label{Background}

\subsection{The Mat{\'e}rn Class of Correlation Functions}
\label{Matern-univ}
The isotropic Mat{\'e}rn correlation function in $\mathbb{R}^d$, $d \geq 1$,
is given by
\citep{Matern}
\begin{align}
\label{matern}
k(\boldsymbol{h} ; \alpha,\nu) = \frac{2^{1-\nu}}{\Gamma(\nu)} (\alpha \|\boldsymbol{h}\| )^\nu {\cal K}_\nu(\alpha \|\boldsymbol{h}\| ), \quad \boldsymbol{h} \in \mathbb{R}^d,
\end{align} with $\|\cdot\|$ the Euclidean norm,
$\Gamma$  the gamma function, ${\cal K}_\nu$ the modified Bessel function of the second kind,
$\nu$ and $\alpha$ positive smoothness and scale parameters, respectively.
 The parameter $\nu$ indexes both the mean square differentiability and the fractal dimension of a Gaussian random field having a Mat{\'e}rn correlation function.

Being a correlation function and integrable in $\mathbb{R}^d$ for all positive $\nu$, the function $k(\cdot;\alpha,\nu)$ is the Fourier transform of a positive and bounded measure that is absolutely continuous with respect to the Lebesgue measure:
\begin{align}
\label{spectralmatern}
k(\boldsymbol{h} ; \alpha,\nu) = \int_{\mathbb{R}^d} \cos(\langle \boldsymbol{h}, \boldsymbol{\omega} \rangle) \, \widetilde{k}(\boldsymbol{\omega}; \alpha,\nu) \text{d}\boldsymbol{\omega}, \quad \boldsymbol{h} \in \mathbb{R}^d,
\end{align}
with $\langle \cdot,\cdot \rangle$ the inner product and $\widetilde{k}$ the isotropic spectral density of the Mat\'ern covariance, defined as \citep{Lantu2002}
\begin{align}
\label{spectraldensitymatern}
 \widetilde{k}(\boldsymbol{\omega}; \alpha,\nu) = \frac{\Gamma(\nu+d/2)}{\Gamma(\nu)\, \alpha^d \, \pi^{d/2}} \left(1+\frac{\|\boldsymbol{\omega}\|^2}{\alpha^2}\right)^{-\nu-d/2}, \quad \boldsymbol{\omega} \in \mathbb{R}^d.
\end{align}
We refer to \eqref{spectraldensitymatern} as the spectral representation of $k(\cdot;\alpha,\nu)$. Another relevant fact is that the mapping $t \mapsto k(\sqrt{t};\alpha,\nu)$ is completely monotonic on the positive real line. Hence,
the mapping \eqref{matern} can be written as a mixture of Gaussian covariances of the form $g(\boldsymbol{h};u)=\exp(-u \|\boldsymbol{h}\|^2)$, where $1/u$ follows a gamma distribution with shape parameter $\nu$ and scale parameter $\alpha^2/4$ \citep[formula 3.471.9]{EmeryLantu2006, Grad}:
 \begin{align}
  \label{integral_repr}
 k(\boldsymbol{h}; \alpha,\nu) =  \int_{0}^{+\infty}   g(\boldsymbol{h};u) f(u; \alpha,\nu) \text{d}u, \qquad \boldsymbol{h} \in \mathbb{R}^d,
  \end{align}
where $$f(u;\alpha,\nu) =  \frac{1}{\Gamma(\nu)} \left( \frac{\alpha}{2} \right)^{2\nu} u^{-\nu-1} \exp\left(-\frac{\alpha^2}{4u}\right), \qquad u > 0, $$
 is the inverse gamma probability density function with parameters $\nu$ and  $\alpha^2/4$.

If $\nu$ is an half-odd-integer, $k(\cdot ; \alpha,\nu)$ separates into the product of a negative exponential function with a polynomial of degree $\nu-1/2$. For instance, $\nu=1/2$ and $\nu=3/2$ correspond to $k(\boldsymbol{h} ; \alpha,1/2)= \exp(-\alpha \|\boldsymbol{h}\|)$ and $k(\boldsymbol{h} ; \alpha,3/2)= \exp(-\alpha \|\boldsymbol{h}\|)(1+\alpha \|\boldsymbol{h}\| )$, which are associated with Gaussian random fields being the continuous versions of a Ornstein-Uhlenbeck process and a first-order autoregressive process, respectively \citep{BFFP}.
Furthermore, with a suitable rescaling, the Mat\'ern correlation function tends to a Gaussian correlation function:
\begin{equation}
\label{convergencetogaussian}
    k(\boldsymbol{h};2 \sqrt{\beta \nu},\nu) \xrightarrow[\nu \to +\infty]{} g(\boldsymbol{h};\beta) = \exp(-\beta \|\boldsymbol{h}\|^2),
\end{equation} with a uniform convergence on any compact set of $\mathbb{R}^d$. This convergence is established by showing that the spectral density of $k(\cdot;2\sqrt{\beta \nu},\nu)$ tends pointwise to that of $g(\cdot;\beta)$ \citep{Lantu2002} as $\nu$ tends to infinity:
\begin{align}
\label{spectraldensitygauss}
 \widetilde{k}(\boldsymbol{\omega}; 2\sqrt{\beta \nu},\nu) \xrightarrow[\nu \to +\infty]{} \widetilde{g}(\boldsymbol{\omega};\beta) = \frac{1}{(4 \pi \beta)^{d/2}} \exp\left(-\frac{\|\boldsymbol{\omega}\|^2}{4 \beta} \right), \quad \boldsymbol{\omega} \in \mathbb{R}^d.
\end{align}

\subsection{Multivariate Mat{\'e}rn and Gaussian Models}
\label{Matern-multi}

The $p$-variate isotropic Mat\'ern covariance model in $\mathbb{R}^d$ is defined as \citep{Gneiting:Kleibler:Schlather:2010}
\begin{equation}
\label{multi_cov}
    {\bf K}({\boldsymbol{h}};\boldsymbol{\alpha},\boldsymbol{\nu},\boldsymbol{\sigma}) = [\sigma_{ij} \, k(\boldsymbol{h};\alpha_{ij},\nu_{ij})]_{i,j=1}^p, \quad \boldsymbol{h} \in \mathbb{R}^d,
\end{equation}
where $\boldsymbol{\alpha} = [\alpha_{ij}]_{i,j=1}^p$ and $\boldsymbol{\nu}=[\nu_{ij}]_{i,j=1}^p$ are symmetric matrices with positive entries (scale and smoothness parameters, respectively), while $\boldsymbol{\sigma} = [\sigma_{ij}]_{i,j=1}^p$ is a symmetric matrix (collocated covariance) with entries in $\mathbb{R}$. As a particular case, if $\boldsymbol{\nu}=\boldsymbol{1}/2$, with $\boldsymbol{1}$ being a matrix of ones, one obtains the $p$-variate exponential covariance. A direct application of \eqref{integral_repr} shows that
\begin{align}
\label{spectral_biv}
{\bf K}({\boldsymbol{h}};\boldsymbol{\alpha},\boldsymbol{\nu},\boldsymbol{\sigma}) = \int_{0}^{+\infty}   g(\boldsymbol{h};u) \boldsymbol{\sigma} \, {\bf F}(u; \boldsymbol{\alpha},\boldsymbol{\nu}) \text{d}u,
\end{align}
with ${\bf F}(u; \boldsymbol{\alpha},\boldsymbol{\nu}) = [f(u; \alpha_{ij},\nu_{ij})]_{i,j=1}^p$. Here, the product and integration are intended as element-wise.

 A necessary and sufficient validity condition for ${\bf K}$ in \eqref{multi_cov} is that the spectral density matrix $\widetilde{{\bf K}}({\boldsymbol{\omega}};\boldsymbol{\alpha},\boldsymbol{\nu},\boldsymbol{\sigma})$ with entries $ \widetilde{K}_{ij}(\boldsymbol{\omega})=\sigma_{ij} \widetilde{k}(\boldsymbol{\omega}; \alpha_{ij},\nu_{ij})$ and $\widetilde{k}$ as defined in \eqref{spectraldensitymatern}, is  a positive semidefinite matrix  for every fixed $\boldsymbol{\omega} \in \mathbb{R}^d$ \citep[][with the references therein]{AlonsoMalaver2015251}. This result has been used to obtain necessary and sufficient validity conditions for the bivariate Mat{\'e}rn model \cite[theorem 3]{Gneiting:Kleibler:Schlather:2010}. Yet, applying these validity conditions becomes prohibitive when $p>2$.

The $p$-variate Gaussian covariance model is defined as
\begin{equation}
  \label{multi_gauss}
   {\bf G}({\boldsymbol{h}};\boldsymbol{\beta},\boldsymbol{\sigma}) = [\sigma_{ij}\, g(\boldsymbol{h};\beta_{ij})], \quad \boldsymbol{h} \in \mathbb{R}^d,
\end{equation}
with $\boldsymbol{\beta} = [\beta_{ij}]_{i,j=1}^p$ a symmetric matrix with positive entries and $\boldsymbol{\sigma} = [\sigma_{ij}]_{i,j=1}^p$ a symmetric matrix with entries in $\mathbb{R}$.
This model is the limit of a properly rescaled multivariate Mat\'ern covariance \eqref{multi_cov} when the smoothness parameters $\nu_{ij}$ tend to infinity, as per \eqref{convergencetogaussian}.

\subsection{Conditionally Negative Semidefinite and Bernstein Matrices}
A $p \times p$ symmetric real-valued matrix $\boldsymbol{A}=[a_{ij}]_{i,j=1}^p$ is said to be conditionally negative semidefinite if, for any vector $\boldsymbol{\lambda}=(\lambda_1,\ldots,\lambda_p)$ in $\mathbb{R}^p$ whose components add to zero, one has $\sum_{i=1}^p \sum_{j=1}^p \lambda_i \, \lambda_j \, a_{ij} \leq 0$. Checking whether or not a symmetric matrix is conditionally negative semidefinite is straightforward, as it amounts to checking the positive semidefiniteness of a related matrix (see lemma~\ref{superlemma} in Appendix~\ref{app1}).

As an example, for a positive integer $d$, a set of real numbers $\{ \eta_{1}, \ldots, \eta_{p} \}$, a set of points in $\mathbb{R}^d$  $\{ \boldsymbol{s}_1, \ldots, \boldsymbol{s}_p \}$, and a variogram function $\gamma: \mathbb{R}^d \times \mathbb{R}^d \to \mathbb{R}_+$, the matrix $[\eta_{ij}]_{i,j=1}^p$, with generic entry
\begin{equation}
\label{ejemplo}
\eta_{ij} = \frac{\eta_{i}+\eta_{j}}{2} + \gamma(\boldsymbol{s}_i,\boldsymbol{s}_j),
\end{equation}
is conditionally negative semidefinite. Recall that a variogram $\gamma: \mathbb{R}^d \times \mathbb{R}^d \to \mathbb{R}$ is a mapping that represents half the variance of the increments of a random field defined in $\mathbb{R}^d$ \citep{Matheron1965}. If $\gamma(\boldsymbol{s}_1,\boldsymbol{s}_2) \equiv \tilde{\gamma}(\boldsymbol{s}_2- \boldsymbol{s}_1)$, the variogram is said to be intrinsically stationary.

Intrinsically stationary variograms can be constructed by means of primitives of completely monotonic functions. A continuous function $b:[0,+\infty) \to \mathbb{R}$ such that $b(0)<+\infty$ is said to be completely monotonic if it is infinitely differentiable and $(-1)^n b^{(n)} \ge 0$, for all $n \in \mathbb{N}$, and where $b^{(n)}$ means the $n$-th derivative of $b$. Here, we abuse of notation when writing $b^{(0)}$ for $b$. A nonnegative and continuous function $B: [0,+\infty) \to \mathbb{R}$ is called a Bernstein function \citep{Schilling, porcu2011schoenberg} if its first derivative is completely monotonic. Arguments in \cite{porcu2011schoenberg} show that, for any Bernstein function $B$, the mapping
\begin{equation} \label{vario}
    \boldsymbol{h} \mapsto \tilde{\gamma}(\boldsymbol{h})=B(\|\boldsymbol{h} \|^\theta), \quad 0 < \theta \leq 2,
\end{equation}
 is an intrinsically stationary variogram in $\mathbb{R}^d$ for any positive integer $d$. In particular, for $\theta=1$, the matrix $\boldsymbol{B}=[B(\|\boldsymbol{s}_i-\boldsymbol{s}_j\|)]_{i,j=1}^p$ is conditionally negative semidefinite. In the following, such a matrix $\boldsymbol{B}$ will be referred to as a \emph{Bernstein matrix} with supporting points $\boldsymbol{s}_1,\ldots,\boldsymbol{s}_p$.

Other well-known examples of conditionally negative semidefinite matrices are:
\begin{itemize}
    \item the zero ($\boldsymbol{0}$) and all-ones ($\boldsymbol{1}$) matrices; \item the sum of two conditionally negative semidefinite matrices;
    \item the product of a conditionally negative semidefinite matrix with a nonnegative constant;
    \item the limit of a convergent sequence of conditionally negative semidefinite matrices;
    \item the product of two Bernstein matrices having the same supporting points.
\end{itemize}
The second to fourth results stem from the fact that the set of conditionally negative semidefinite matrices is a convex cone and is closed under pointwise convergence. The last result is instead an exception, as, in general, the product of two Bernstein functions is not necessarily a Bernstein function; it stems from \eqref{vario} with $\theta=2$ and the fact that, if $t \mapsto B_1(t)$ and $t \mapsto B_2(t)$ are Bernstein functions, so is $t \mapsto B_1(\sqrt{t})\,B_2(\sqrt{t})$ \citep[corollary 3.8]{porcu2011schoenberg, Schilling}.

\section{Sufficient Validity Conditions for the Multivariate Mat\'ern Model} \label{sec3}

In this section, we establish sufficient validity conditions for the multivariate Mat\'ern model \eqref{multi_cov}, where $\boldsymbol{\alpha}$, $\boldsymbol{\nu}$ and $\boldsymbol{\sigma}$ are symmetric real-valued matrices, the former two with positive entries. We recall that, in what follows, all matrix operations are taken element-wise and we denote by $\left\lfloor \cdot \right\rfloor$ and $\left\lceil \cdot \right\rceil$ the floor and ceil functions, respectively. In the first theorem hereinafter, the direct and cross-covariances are assumed to have the same smoothness parameter $\nu>0$, a restriction that is lifted in the following two theorems.

\begin{theorem}[parsimonious Mat\'ern model]
\label{validityexponentialdD}
The multivariate Mat\'ern model \eqref{multi_cov} with $\boldsymbol{\nu} = \nu \boldsymbol{1}$ and $\nu>0$ is valid in $\mathbb{R}^{d}$, $d \geq 1$, if
\begin{itemize}
    \item [(1)]
    $\boldsymbol{\alpha}$ is conditionally negative semidefinite;
    \item [(2)]
    $\boldsymbol{\sigma} \, \boldsymbol{\alpha}^{\left\lfloor \frac{d+1+3 \lceil 2\nu \rceil}{2} \right\rfloor}$ is positive semidefinite.
\end{itemize}
\end{theorem}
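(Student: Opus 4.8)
The plan is to exploit the scale mixture representation \eqref{spectral_biv} together with the spectral criterion for validity of a multivariate Matérn model: it suffices to show that, for every fixed $\boldsymbol{\omega}\in\mathbb{R}^d$, the matrix $[\sigma_{ij}\,\widetilde{k}(\boldsymbol{\omega};\alpha_{ij},\nu)]_{i,j=1}^p$ is positive semidefinite, where by \eqref{spectraldensitymatern} (with $\boldsymbol{\nu}=\nu\boldsymbol{1}$) this equals, up to the positive constant $\Gamma(\nu+d/2)/(\Gamma(\nu)\pi^{d/2})$, the matrix $\big[\sigma_{ij}\,\alpha_{ij}^{-d}\,(1+\|\boldsymbol{\omega}\|^2/\alpha_{ij}^2)^{-\nu-d/2}\big]_{i,j}$. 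Writing $r=\|\boldsymbol{\omega}\|^2\ge 0$ and clearing denominators, this is a positive semidefinite multiple of $\big[\sigma_{ij}\,\alpha_{ij}^{-d}\,\alpha_{ij}^{2\nu+d}\,(\alpha_{ij}^2+r)^{-\nu-d/2}\big]_{i,j}=\big[\sigma_{ij}\,\alpha_{ij}^{2\nu}\,(\alpha_{ij}^2+r)^{-\nu-d/2}\big]_{i,j}$. So the goal reduces to: for each $r\ge 0$, the matrix $\big[\sigma_{ij}\,\alpha_{ij}^{2\nu}\,(\alpha_{ij}^2+r)^{-(\nu+d/2)}\big]_{i,j}$ is positive semidefinite whenever hypotheses (1)--(2) hold.

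The key structural step is a Schur-product (Hadamard) factorization. Since $\boldsymbol{\alpha}$ is conditionally negative semidefinite with positive entries, $t\mapsto e^{-t\alpha_{ij}}$ and more generally the functions $\alpha_{ij}\mapsto$ (completely monotone functions of $\alpha_{ij}$) produce positive semidefinite matrices; in particular $[\alpha_{ij}^{-s}]_{i,j}$ is positive semidefinite for every $s\ge 0$ because $x\mapsto x^{-s}$ is completely monotone on $(0,\infty)$ and $\boldsymbol{\alpha}\ge 0$ in the CNSD sense (this is the standard Schoenberg-type fact used throughout the paper). I would then observe that $\alpha_{ij}^2$ is also conditionally negative semidefinite (it is a Bernstein matrix with $\theta=2$ applied to the linear-in-increments structure, or directly: the square of entries of a CNSD matrix with a constant added is handled via the bullet list of closure properties), hence $(\alpha_{ij}^2+r)^{-(\nu+d/2)}$ defines a positive semidefinite matrix for each $r\ge 0$ and each exponent $\nu+d/2>0$, again by complete monotonicity of $x\mapsto(x+r)^{-(\nu+d/2)}$. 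The remaining task is to absorb the factor $\alpha_{ij}^{2\nu}$: write $\alpha_{ij}^{2\nu}=\alpha_{ij}^{2\nu-m}\cdot\alpha_{ij}^{m}$ where $m=\lfloor(d+1+3\lceil 2\nu\rceil)/2\rfloor$ is the exponent appearing in hypothesis (2). Then
$$
\sigma_{ij}\,\alpha_{ij}^{2\nu}\,(\alpha_{ij}^2+r)^{-(\nu+d/2)}
=\big(\sigma_{ij}\,\alpha_{ij}^{m}\big)\cdot\big(\alpha_{ij}^{2\nu-m}\,(\alpha_{ij}^2+r)^{-(\nu+d/2)}\big),
$$
and the first factor is a positive semidefinite matrix by hypothesis (2). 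By the Schur product theorem it then suffices to show that the second factor, $\big[\alpha_{ij}^{2\nu-m}\,(\alpha_{ij}^2+r)^{-(\nu+d/2)}\big]_{i,j}$, is positive semidefinite for every $r\ge 0$.

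The main obstacle — and the heart of the proof — is precisely this last point: showing that for $\boldsymbol{\alpha}$ CNSD with positive entries, the matrix with entries $\alpha_{ij}^{-\lambda}(\alpha_{ij}^2+r)^{-\mu}$ is positive semidefinite, where $\lambda=m-2\nu\ge 0$ and $\mu=\nu+d/2>0$, for all $r\ge 0$. The natural route is to express $x\mapsto x^{-\lambda/2}(x+r)^{-\mu}$, viewed as a function of $x=\alpha_{ij}^2$ (with $x^{-\lambda/2}$ reinterpreted via $\alpha_{ij}^{-\lambda}$), as a function that is completely monotone in the relevant variable; since a product of two completely monotone functions is completely monotone, and $x\mapsto x^{-\lambda/2}$, $x\mapsto(x+r)^{-\mu}$ are each completely monotone on $(0,\infty)$, the composite $x\mapsto\alpha_{ij}^{-\lambda}(\alpha_{ij}^2+r)^{-\mu}$ is a completely monotone function of $\alpha_{ij}^2$, which is CNSD; hence by the Schoenberg correspondence the associated matrix is positive semidefinite. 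The role of the specific value $m=\lfloor(d+1+3\lceil 2\nu\rceil)/2\rfloor$ is to guarantee $m\ge 2\nu$ so that $\lambda\ge 0$ (this needs $m\ge\lceil 2\nu\rceil$, which is easily checked from the formula for $m$), while keeping $m$ as small as possible so that hypothesis (2) is as weak as possible; I would verify the integer inequality $\lfloor(d+1+3\lceil 2\nu\rceil)/2\rfloor\ge\lceil 2\nu\rceil$ as a short lemma. Finally, reassembling: for each $r=\|\boldsymbol{\omega}\|^2$ the spectral density matrix is a positive multiple of a Schur product of two positive semidefinite matrices, hence positive semidefinite; by the spectral criterion cited after \eqref{multi_gauss}, ${\bf K}$ in \eqref{multi_cov} is a valid covariance in $\mathbb{R}^d$. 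I expect the delicate bookkeeping to be in tracking which exponents must be nonnegative and in citing the exact form of the Schoenberg / complete-monotonicity lemmas (deferred to the appendix), rather than in any single hard computation.
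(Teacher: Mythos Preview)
Your spectral-density approach has a genuine gap at the step you correctly flag as ``the heart of the proof.'' You assert that $\boldsymbol{\alpha}^2=[\alpha_{ij}^2]$ is conditionally negative semidefinite whenever $\boldsymbol{\alpha}$ is, invoking either the $\theta=2$ Bernstein-matrix construction or the closure-property bullet list. Neither justification applies: hypothesis~(1) gives only that $\boldsymbol{\alpha}$ is CNSD, not that it is a Bernstein matrix $[B(\|\boldsymbol{s}_i-\boldsymbol{s}_j\|)]$ for some supporting points, and the CNSD cone is \emph{not} closed under element-wise squaring. A concrete counterexample: with $p=2$, $\alpha_{11}=1$, $\alpha_{22}=3$, $\alpha_{12}=2$, the matrix $\boldsymbol{\alpha}$ is CNSD ($1+3-2\cdot 2=0$) but $\boldsymbol{\alpha}^2$ is not ($1+9-2\cdot 4=2>0$). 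Hence you cannot deduce that $[(\alpha_{ij}^2+r)^{-\mu}]$ is positive semidefinite from Schoenberg applied to $\boldsymbol{\alpha}^2$, and your route ``$x\mapsto x^{-\lambda/2}(x+r)^{-\mu}$ is completely monotone in $x=\alpha_{ij}^2$'' collapses.

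The natural repair---showing that $\alpha\mapsto\alpha^{-\lambda}(\alpha^2+r)^{-\mu}$ is completely monotone \emph{in $\alpha$ itself} for the specific exponents $\lambda=m-2\nu$, $\mu=\nu+d/2$---is true but is far from bookkeeping: unpacking it, one needs that $\alpha\mapsto\alpha^{-\lambda}e^{-u\alpha^2}$ is a Laplace transform of a nonnegative measure for every $u>0$, which fails for small $\lambda$ (e.g.\ $\lambda=0$) and, for the precise $\lambda$ arising here, is essentially equivalent to the nonnegativity of the spectral density of a compactly-supported Askey/Wendland covariance in the right dimension. That is exactly the nontrivial ingredient the paper supplies, by a completely different route. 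The paper first reduces (via a descent lemma in the smoothness parameter) to half-integer $\nu$, then writes the exponential covariance in $\mathbb{R}^{d+2\nu-1}$ as a gamma-weighted scale mixture of Askey covariances, applies a mont\'ee (Radon transform) of order $2\nu-1$ to obtain the Mat\'ern covariance in $\mathbb{R}^d$ as a mixture of Wendland covariances, and observes that the resulting weight matrix has entries proportional to $\sigma_{ij}\,\alpha_{ij}^{2\nu+\mu}\exp(-\alpha_{ij}t)$---a Schur product of $[\sigma_{ij}\alpha_{ij}^{m}]$ (PSD by~(2)) with $[\exp(-t\alpha_{ij})]$ (PSD by~(1) and Schoenberg). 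The exponent $m=\lfloor(d+1+3\lceil 2\nu\rceil)/2\rfloor$ emerges from the Askey exponent $\mu=\lfloor(d+2\nu+1)/2\rfloor$ combined with the factor $\alpha^{2\nu-1+\mu+1}$ produced by the mont\'ee and the gamma weight, not from an elementary inequality of the kind you sketch.
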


\begin{theorem}[general conditions, full Mat\'ern model]
\label{validitymatern20}
The multivariate Mat\'ern model \eqref{multi_cov} is valid in $\mathbb{R}^{d}$, $d \geq 1$, if, either
\begin{itemize}
\item[(A)]
\begin{itemize}
    \item [(1)] $\boldsymbol{\nu}$
    is a Bernstein matrix;
    \item [(2)]
    $\boldsymbol{\alpha}^{-2}$ is a Bernstein matrix with the same supporting points as $\boldsymbol{\nu}$;
    \item [(3)]
    $\frac{\boldsymbol{\sigma} \, \Gamma(\boldsymbol{\nu}+d/2)}{\boldsymbol{\alpha}^d \, \Gamma(\boldsymbol{\nu})}$ is positive semidefinite;
\end{itemize}
\hspace{-0.7cm} or
\item[(B)]
there exists a symmetric matrix $\boldsymbol{\psi}$
with positive entries such that:
\begin{itemize}
    \item [(1)] $\boldsymbol{\psi}$ is conditionally negative semidefinite;
    \item [(2)] $\boldsymbol{\nu}$ is conditionally negative semidefinite;
    \item [(3)] $\boldsymbol{\alpha}^2 \, \boldsymbol{\psi}-\boldsymbol{\nu}$
    is conditionally negative semidefinite;
    \item [(4)]
    $\frac{\boldsymbol{\sigma}}{\Gamma(\boldsymbol{\nu})} \boldsymbol{\psi}^{\boldsymbol{\nu}+d/2} \, \boldsymbol{\alpha}^{2\boldsymbol{\nu}} \exp(-\boldsymbol{\nu})$ is positive semidefinite.
\end{itemize}
\end{itemize}
\end{theorem}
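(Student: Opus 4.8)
The common engine is the scale‑mixture representation \eqref{integral_repr} of the univariate Mat\'ern covariance as a mixture of Gaussian covariances $g(\boldsymbol{h};u)=\exp(-u\|\boldsymbol{h}\|^{2})$, together with two standard facts: the spectral criterion of Section~\ref{Background} (the model \eqref{multi_cov}, and likewise \eqref{multi_gauss}, is valid if and only if its spectral density matrix is positive semidefinite at every frequency), and Schoenberg's theorem (a symmetric matrix $\boldsymbol{C}=[c_{ij}]$ is conditionally negative semidefinite if and only if $[\exp(-t\,c_{ij})]_{ij}$ is positive semidefinite for every $t\ge0$; see Appendix~\ref{app1}). Cases (A) and (B), whose items I label (A1)--(A3) and (B1)--(B4), are reached by different routes: (A) through the spectral density \eqref{spectraldensitymatern} and the calculus of Bernstein functions, (B) through a change of variables in \eqref{integral_repr} that recasts ${\bf K}$ as a superposition of multivariate \emph{Gaussian} covariances, followed by a peeling of positive semidefinite Schur factors against (B1)--(B4).

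\textbf{Case (A).} By the spectral criterion it suffices to show that $\widetilde{{\bf K}}(\boldsymbol{\omega})=[\sigma_{ij}\widetilde{k}(\boldsymbol{\omega};\alpha_{ij},\nu_{ij})]$ is positive semidefinite for every $\boldsymbol{\omega}$. With $t=\|\boldsymbol{\omega}\|^{2}$, formula \eqref{spectraldensitymatern} gives $\widetilde{K}_{ij}(\boldsymbol{\omega})=\pi^{-d/2}\,\frac{\sigma_{ij}\,\Gamma(\nu_{ij}+d/2)}{\alpha_{ij}^{d}\,\Gamma(\nu_{ij})}\,(1+t\,\alpha_{ij}^{-2})^{-\nu_{ij}-d/2}$, so by (A3) and the Schur product theorem the task reduces to proving that $[(1+t\,\alpha_{ij}^{-2})^{-\nu_{ij}-d/2}]_{ij}$ is positive semidefinite for each $t\ge0$. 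Let $r_{ij}=\|\boldsymbol{s}_i-\boldsymbol{s}_j\|$ and let $B_{\nu},B_{\alpha}$ be the Bernstein functions supplied by (A1)--(A2), so $\nu_{ij}=B_{\nu}(r_{ij})$ and $\alpha_{ij}^{-2}=B_{\alpha}(r_{ij})$; then the matrix above equals $[\exp(-\Phi_t(r_{ij}))]$ with $\Phi_t(r)=\big(B_{\nu}(r)+d/2\big)\,\ln\big(1+t\,B_{\alpha}(r)\big)$. I would then observe that $B_{\nu}+d/2$ is a Bernstein function (a nonnegative constant plus a Bernstein function) and that $r\mapsto\ln(1+t\,B_{\alpha}(r))$ is a Bernstein function (composition of the Bernstein function $x\mapsto\ln(1+x)$ with the Bernstein function $r\mapsto t\,B_{\alpha}(r)$); hence $[B_{\nu}(r_{ij})+d/2]$ and $[\ln(1+t\,B_{\alpha}(r_{ij}))]$ are Bernstein matrices sharing the supporting points $\boldsymbol{s}_1,\dots,\boldsymbol{s}_p$, so their elementwise product $[\Phi_t(r_{ij})]$ is conditionally negative semidefinite by the last item of the list in Section~\ref{Background} (equivalently, Corollary~3.8 of \cite{porcu2011schoenberg}). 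By Schoenberg's theorem $[\exp(-\Phi_t(r_{ij}))]$ is then positive semidefinite, which proves (A).

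\textbf{Case (B).} The idea is to let $\boldsymbol{\psi}$ absorb a Gaussian scale. For each pair $(i,j)$ the substitution $u\mapsto 1/(s\,\psi_{ij})$ in \eqref{integral_repr} (legitimate since $\boldsymbol{\psi}$ has positive entries) yields
\begin{equation*}
\sigma_{ij}\,k(\boldsymbol{h};\alpha_{ij},\nu_{ij})=\frac{\sigma_{ij}\,\psi_{ij}^{\nu_{ij}}\,\alpha_{ij}^{2\nu_{ij}}}{4^{\nu_{ij}}\,\Gamma(\nu_{ij})}\int_{0}^{\infty}s^{\nu_{ij}-1}\,e^{-\alpha_{ij}^{2}\psi_{ij}s/4}\;g\!\big(\boldsymbol{h};\tfrac{1}{s\,\psi_{ij}}\big)\,\mathrm{d}s ,
\end{equation*}
whence ${\bf K}(\boldsymbol{h})=\int_0^{\infty}{\bf M}_s(\boldsymbol{h})\,\mathrm{d}s$, where for fixed $s>0$ the matrix ${\bf M}_s$ has the form \eqref{multi_gauss} with scale matrix $[1/(s\,\psi_{ij})]$. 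Since the cone of valid $p$-variate covariances is closed under such integrals, it suffices that every ${\bf M}_s$ be valid; by the spectral criterion and \eqref{spectraldensitygauss}, discarding the positive factor depending only on $s$ and $d$ and writing $\lambda=s/4$, $t=\|\boldsymbol{\omega}\|^{2}$, this amounts to
\begin{equation*}
\left[\frac{\sigma_{ij}}{\Gamma(\nu_{ij})}\,\psi_{ij}^{\nu_{ij}+d/2}\,\alpha_{ij}^{2\nu_{ij}}\,\lambda^{\nu_{ij}}\,e^{-\lambda\,\alpha_{ij}^{2}\psi_{ij}}\,e^{-\lambda\,\psi_{ij}t}\right]_{ij}\succeq0\qquad\text{for all }\lambda>0,\ t\ge0 .
\end{equation*}
From here I would strip off positive semidefinite Schur factors in turn: $[e^{-\lambda\psi_{ij}t}]$ by (B1); then $[e^{-\lambda(\alpha_{ij}^{2}\psi_{ij}-\nu_{ij})}]$, after splitting $e^{-\lambda\alpha_{ij}^{2}\psi_{ij}}=e^{-\lambda\nu_{ij}}e^{-\lambda(\alpha_{ij}^{2}\psi_{ij}-\nu_{ij})}$, by (B3); and finally, using $\lambda^{\nu_{ij}}e^{-\lambda\nu_{ij}}=(\lambda e^{-\lambda})^{\nu_{ij}}=e^{-\nu_{ij}}\big(e\,\lambda e^{-\lambda}\big)^{\nu_{ij}}$ with $e\,\lambda e^{-\lambda}=e^{-(\lambda-1-\ln\lambda)}$ and the elementary inequality $\lambda-1-\ln\lambda\ge0$ (equivalently $\lambda e^{-\lambda}\le e^{-1}$), the factor $[(e\,\lambda e^{-\lambda})^{\nu_{ij}}]$ by (B2). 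What remains is precisely $\big[\tfrac{\sigma_{ij}}{\Gamma(\nu_{ij})}\psi_{ij}^{\nu_{ij}+d/2}\alpha_{ij}^{2\nu_{ij}}e^{-\nu_{ij}}\big]$, positive semidefinite by (B4); the Schur product theorem then closes the argument.

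\textbf{Where the difficulty lies.} The change‑of‑variables identities, the passage to spectral densities, and the Schur‑product bookkeeping are routine. The real content is elsewhere and would be written out with care. In (A) it is the claim that $\tau\mapsto\Phi_t(\sqrt\tau)=\big(B_{\nu}(\sqrt\tau)+d/2\big)\ln(1+t\,B_{\alpha}(\sqrt\tau))$ is a Bernstein function: this is a product of two Bernstein functions of $\sqrt\tau$ — \emph{not} of $\tau$, where the statement would fail — and it rests on the non‑obvious stability result Corollary~3.8 of \cite{porcu2011schoenberg}, as well as on recognizing $\ln(1+t\,B_{\alpha})$ as Bernstein. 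In (B) it is the choice of the change of variables $u\mapsto 1/(s\,\psi_{ij})$ — exactly the one turning $\boldsymbol{\psi}$ into the reciprocal of a Gaussian scale, so that (B1) can act after Fourier transformation — together with the observation that the residual shape function is $\lambda\mapsto\lambda e^{-\lambda}$, whose maximum $e^{-1}$ is precisely the constant needed to match the $e^{-\boldsymbol{\nu}}$ in (B4); any other grouping of the factors leaves an $e^{+(\text{conditionally negative semidefinite})}$ term that cannot be controlled.
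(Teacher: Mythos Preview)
Your proof is correct and follows essentially the same route as the paper: part~(A) via the spectral density \eqref{spectraldensitymatern} together with the product-of-Bernstein-matrices fact, and part~(B) via the $\boldsymbol{\psi}$-dependent change of variables in the Gaussian scale mixture \eqref{integral_repr} followed by a Schur factorization against (B1)--(B4). Your handling of (B) is in fact slightly cleaner than the paper's: the paper splits into the cases $t(v)\le1$ and $t(v)>1$ with two different factorizations, whereas your single identity $\lambda^{\boldsymbol{\nu}}e^{-\lambda\boldsymbol{\alpha}^2\boldsymbol{\psi}}=e^{-(\lambda-1-\ln\lambda)\boldsymbol{\nu}}\,e^{-\lambda(\boldsymbol{\alpha}^2\boldsymbol{\psi}-\boldsymbol{\nu})}\,e^{-\boldsymbol{\nu}}$ works uniformly for all $\lambda>0$ because $\lambda-1-\ln\lambda\ge0$ always holds.
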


Conditions (B) above provide general sufficient validity conditions that depend on a conditionally negative semidefinite matrix $\boldsymbol{\psi}$.
Some particular cases are given in the next theorem and examples, which bring simpler sufficient conditions that only depend on the model parameters ($\boldsymbol{\alpha},\boldsymbol{\nu},\boldsymbol{\sigma}$) and, possibly, one scalar hyperparameter $\beta$ instead of a matrix-valued hyperparameter $\boldsymbol{\psi}$.

\begin{theorem}[simplified conditions, full Mat\'ern model]
\label{validityMatern}
The multivariate Mat\'ern model \eqref{multi_cov} is valid in $\mathbb{R}^d$, $d \geq 1$, if, either
\begin{itemize}
\item[(A)]
\begin{itemize}
    \item [(1)] $\boldsymbol{\nu}$ is conditionally negative semidefinite;
    \item [(2)] $\boldsymbol{\nu} \, \boldsymbol{\alpha}^{-2}$ is conditionally negative semidefinite;
    \item [(3)]
    $\frac{\boldsymbol{\sigma}}{\Gamma(\boldsymbol{\nu})} \boldsymbol{\alpha}^{-d} \, \boldsymbol{\nu}^{\boldsymbol{\nu}+d/2} \, \exp(-\boldsymbol{\nu})$ is positive semidefinite;
\end{itemize}
\hspace{-0.7cm} or
\item[(B)] for some positive scalar, $\beta$, one has
\begin{itemize}
    \item [(1)] $\boldsymbol{\nu}$ is conditionally negative semidefinite;
    \item [(2)] $\boldsymbol{\alpha}^2-\beta \boldsymbol{\nu}$ is conditionally negative semidefinite;
    \item [(3)]
    $\frac{\boldsymbol{\sigma}}{\Gamma(\boldsymbol{\nu})}  \left(\frac{\boldsymbol{\alpha}^{2}}{\beta}\right)^{\boldsymbol{\nu}} \exp(-\boldsymbol{\nu})$ is positive semidefinite.
\end{itemize}
\end{itemize}
\end{theorem}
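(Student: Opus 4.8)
The plan is to derive both (A) and (B) as immediate specializations of Theorem~\ref{validitymatern20}(B): in each case I will exhibit an admissible symmetric matrix $\boldsymbol{\psi}$ with positive entries for which conditions (B1)--(B4) of that theorem reduce precisely to the hypotheses listed here. No new analytic input is needed beyond Theorem~\ref{validitymatern20}(B) and the elementary stability properties of conditionally negative semidefinite matrices recalled in Section~\ref{sec2} (the zero and all-ones matrices are conditionally negative semidefinite, and the class is closed under multiplication by nonnegative scalars and under addition).

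For part (A), the plan is to take $\boldsymbol{\psi}=\boldsymbol{\nu}\,\boldsymbol{\alpha}^{-2}$, with all operations element-wise. This matrix is symmetric with strictly positive entries because $\boldsymbol{\nu}$ and $\boldsymbol{\alpha}$ have positive entries, so it is an admissible hyperparameter in Theorem~\ref{validitymatern20}(B). I would then check the four conditions of that theorem one by one: condition (B1) is exactly hypothesis (A2); condition (B2) is exactly hypothesis (A1); for condition (B3) one computes $\boldsymbol{\alpha}^2\,\boldsymbol{\psi}-\boldsymbol{\nu}=\boldsymbol{\nu}-\boldsymbol{\nu}=\boldsymbol{0}$, which is conditionally negative semidefinite; and for condition (B4) an element-wise exponent computation gives $\boldsymbol{\psi}^{\boldsymbol{\nu}+d/2}\,\boldsymbol{\alpha}^{2\boldsymbol{\nu}}=\boldsymbol{\nu}^{\boldsymbol{\nu}+d/2}\,\boldsymbol{\alpha}^{-2\boldsymbol{\nu}-d}\,\boldsymbol{\alpha}^{2\boldsymbol{\nu}}=\boldsymbol{\nu}^{\boldsymbol{\nu}+d/2}\,\boldsymbol{\alpha}^{-d}$, so the matrix in (B4) is exactly the one in hypothesis (A3). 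Theorem~\ref{validitymatern20}(B) then yields validity.

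For part (B), the plan is to take $\boldsymbol{\psi}=\beta^{-1}\boldsymbol{1}$, which is symmetric with positive entries since $\beta>0$. Here condition (B1) holds because any nonnegative scalar multiple of the all-ones matrix is conditionally negative semidefinite; condition (B2) is hypothesis (B1) of the present theorem; for condition (B3) one has $\boldsymbol{\alpha}^2\,\boldsymbol{\psi}-\boldsymbol{\nu}=\beta^{-1}(\boldsymbol{\alpha}^2-\beta\boldsymbol{\nu})$, which is conditionally negative semidefinite by hypothesis (B2) together with closure under multiplication by the positive constant $\beta^{-1}$; and for condition (B4) the element-wise identity $(\beta^{-1})^{\boldsymbol{\nu}+d/2}\,\boldsymbol{\alpha}^{2\boldsymbol{\nu}}=\beta^{-d/2}\,(\boldsymbol{\alpha}^2/\beta)^{\boldsymbol{\nu}}$ shows that the matrix in (B4) equals $\beta^{-d/2}$ times the matrix in hypothesis (B3), hence is positive semidefinite. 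Theorem~\ref{validitymatern20}(B) again applies.

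I do not expect any genuine obstacle here, since the result is essentially a corollary: once the two substitutions $\boldsymbol{\psi}=\boldsymbol{\nu}\,\boldsymbol{\alpha}^{-2}$ and $\boldsymbol{\psi}=\beta^{-1}\boldsymbol{1}$ are identified, everything follows by bookkeeping. The only mild point of care is the algebra of element-wise powers in condition (B4), where one must read $\boldsymbol{\psi}^{\boldsymbol{\nu}+d/2}$ as the entrywise map $(i,j)\mapsto \psi_{ij}^{\nu_{ij}+d/2}$ and combine it correctly with $\boldsymbol{\alpha}^{2\boldsymbol{\nu}}$ and $\exp(-\boldsymbol{\nu})$ so as to land exactly on hypotheses (A3) and (B3); conceptually, part (A) is the choice of $\boldsymbol{\psi}$ that makes condition (B3) an equality, while part (B) is the constant-matrix choice that removes the matrix-valued hyperparameter in favour of a single scalar.
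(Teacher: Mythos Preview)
Your proposal is correct and follows exactly the paper's approach: the paper's proof consists of the single sentence that Conditions (A) and (B) are particular cases of Theorem~\ref{validitymatern20}(B) with $\boldsymbol{\psi}=\boldsymbol{\nu}\,\boldsymbol{\alpha}^{-2}$ and $\boldsymbol{\psi}=\boldsymbol{1}/\beta$, respectively. Your write-up simply spells out the bookkeeping for each of the four conditions, which the paper leaves implicit.
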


\begin{description}
\item[\textbf{Example 1.}] If $\boldsymbol{\nu}=\nu \boldsymbol{1}$ with $\nu>0$, the following conditions are obtained from Conditions (A) of Theorem~\ref{validityMatern}:
\begin{itemize}
    \item [(1)] $\boldsymbol{\alpha}^{-2}$ is conditionally negative semidefinite;
    \item [(2)]
    $\boldsymbol{\sigma} \boldsymbol{\alpha}^{-d}$ is positive semidefinite.
\end{itemize}
These conditions apply, in particular, to the multivariate exponential model, which corresponds to the case when $\nu$ is equal to $0.5$. Owing to \eqref{convergencetogaussian}, they also apply to the multivariate Gaussian model \eqref{multi_gauss} after replacing $\boldsymbol{\alpha}$ with ${\boldsymbol{\beta}^{1/2}}$, which yields the same conditions as in Lemma~\ref{validityGaussian} given in Appendix~\ref{app1}.\\
\item[\textbf{Example 2.}] If $\boldsymbol{\alpha}=\alpha \boldsymbol{1}$ with $\alpha>0$, the following conditions are obtained from Conditions (A) of Theorem~\ref{validityMatern}:
\begin{itemize}
    \item [(1)] $\boldsymbol{\nu}$ is conditionally negative semidefinite;
    \item [(2)]
    $\frac{\boldsymbol{\sigma}}{\Gamma(\boldsymbol{\nu})} \, \boldsymbol{\nu}^{\boldsymbol{\nu}+d/2} \, \exp(-\boldsymbol{\nu})$ is positive semidefinite.\\
\end{itemize}
\item[\textbf{Example 3.}] If $\boldsymbol{\nu}=\nu \boldsymbol{1}$ with $\nu>0$, the following conditions are obtained from Conditions (B) of Theorem~\ref{validityMatern}:
\begin{itemize}
    \item [(1)] $\boldsymbol{\alpha}^2$ is conditionally negative semidefinite;
    \item [(2)]
    $\boldsymbol{\sigma} \boldsymbol{\alpha}^{2\nu}$ is positive semidefinite.
\end{itemize}
As for Example 1, these conditions apply, in particular, to the multivariate exponential model. Seemingly, the overlap between Examples 1 and 3 is limited to the case when $\boldsymbol{\alpha}=\alpha \, \boldsymbol{1}$, which fulfills the first condition of both examples.\\
\end{description}

\section{Comparison with the Sufficient Conditions in Apanasovich et al. (2012)} \label{sec5}

\subsection{General Statements}

Conditions (B) in Theorem~\ref{validityMatern} bear resemblance to the sufficient conditions provided by \citet{Apanasovich},
specifically:
\begin{itemize}
    \item [(i)] There exist $\delta \geq 0$ and a correlation matrix with nonnegative entries $[a_{ij}]_{i,j=1}^p$ such that
    \begin{equation}
        \label{apanasovich_i}
        \nu_{ij} = \frac{\nu_{ii}+\nu_{jj}}{2}+\delta(1-a_{ij}), \quad i,j=1,\cdots,p;
    \end{equation}
    \item [(ii)] $[\alpha_{ij}^2]_{i,j=1}^p$ is conditionally negative semidefinite;
    \item [(iii)]
    $\left[\frac{\sigma_{ij} \, \Gamma(\nu_{ij}+d/2)}{\Gamma(\nu_{ij}) \, \Gamma((\nu_{ii}+\nu_{jj}+d)/2)}  \alpha_{ij}^{2\delta+\nu_{ii}+\nu_{jj}} \right]_{i,j=1}^p$ is positive semidefinite.
\end{itemize}

In particular, any matrix $\boldsymbol{\nu}$ satisfying Condition (i) of \citet{Apanasovich} also satisfies Condition (B.1) of Theorem~\ref{validityMatern}, but the reciprocal is not true. Indeed, Condition (i) corresponds to a particular case of \eqref{ejemplo}, with $\gamma$ a bounded variogram that does not exceed its sill $\delta$. Variograms with a hole effect that take values greater than the sill, or variograms without a sill, provide examples of matrices $\boldsymbol{\nu}$ satisfying Condition (B.1) of Theorem~\ref{validityMatern} but not Condition (i) of \citet{Apanasovich}; this is the case for the matrix with entries $\nu_{ij}=1+(i-j)^2$ \citep[remark 2.5]{Berg}.
Conversely, any matrix $\boldsymbol{\alpha}$ such that Conditions (B.1) and (B.2) of Theorem~\ref{validityMatern} hold also satisfies Condition (ii) of \citet{Apanasovich}, but the reciprocal is not true. As for Condition (B.3), it does not depend on the spatial dimension $d$, as Condition (iii) of \citet{Apanasovich} does.

In the next subsections, the conditions of Theorems~\ref{validitymatern20} and~\ref{validityMatern} and of \citet{Apanasovich} are compared for three specific examples of multivariate Mat\'ern covariance models.

\subsection{First Example}

Let $2\nu_{ij}=\nu_{ii}+\nu_{jj}$ for $i, j = 1, \cdots, p$. In this case, \eqref{apanasovich_i} gives $\delta=0$, so that Condition (iii) of \citet{Apanasovich} reduces to the positive semidefiniteness of $\boldsymbol{\sigma} \, \boldsymbol{\alpha}^{2\boldsymbol{\nu}} / \Gamma(\boldsymbol{\nu})$, as does Condition (B.3) of Theorem~\ref{validityMatern}, if one accounts for the fact that $\exp(-\nu_{ij})/\beta^{\nu_{ij}}$ separates into the product of a term depending only on $i$ and a term depending only on $j$. Furthermore, if $\boldsymbol{\alpha}^2$ is conditionally negative semidefinite, so is $\boldsymbol{\alpha}^2-\beta \boldsymbol{\nu}$ for any $\beta > 0$. Accordingly, in the case when $2\nu_{ij}=\nu_{ii}+\nu_{jj}$, the conditions of \citet{Apanasovich} coincide with the set (B) of Theorem~\ref{validityMatern} and no longer depend on the spatial dimension $d$.

\subsection{Second Example}

In the two-dimensional space ($d=2)$, consider a $p$-variate Mat\'ern covariance model with $\sigma_{ij}=1$ if $i=j$, $\rho$ otherwise, $\nu_{ij}=0.5$ if $i=j$, $1.5$ otherwise, and $\alpha^2_{ij}=0.5 \beta$ if $i=j$, $1.5 \beta+a$ otherwise, with $\rho \geq 0$, $\beta > 0$ and $a \geq 0$. Conditions (i) and (ii) of \citet{Apanasovich} are satisfied with $\delta=1$, and so are Conditions (B.1) and (B.2) of Theorem~\ref{validityMatern}. Accordingly, one can calculate the maximum permissible value for the collocated correlation coefficient $\rho$ under Conditions (iii) and (B.3), respectively, as a function of $\beta$, $a$ and $p$.

For $a=0$, the maximum collocated correlation coefficient does not depend on $\beta$ and $p$. One finds $\rho_{max}=0.064$ with Condition (iii) of \citet{Apanasovich} and $\rho_{max}=0.523$ with Condition (B.3) of Theorem~\ref{validityMatern}. For $a=0.5$ and $a=5$, the maximum collocated correlation coefficient is found not to depend on $p$, with a much ($8$ to $10.7$ times) higher value obtained with Condition (B.3) of Theorem~\ref{validityMatern} than with Condition (iii) of \citet{Apanasovich} (Figure~\ref{fig:caso3}). The difference between both sets of conditions increases even more when the spatial dimension $d$ increases, because Condition (iii) of \citet{Apanasovich} becomes stricter and stricter, while Condition (B.3) of Theorem~\ref{validityMatern} remains unchanged.

\begin{figure}
    \centering
\includegraphics[width = 0.49\textwidth]{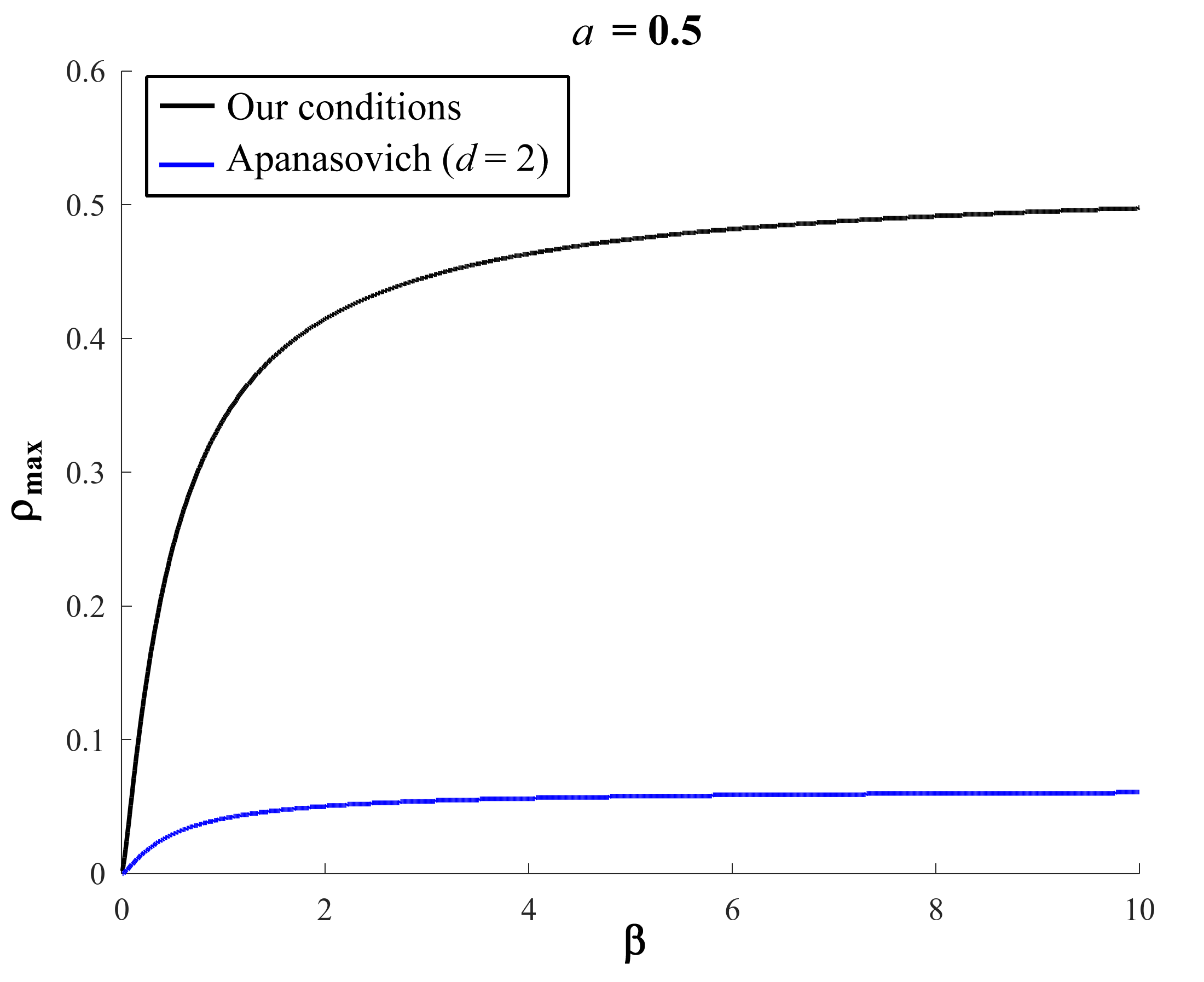}
\includegraphics[width = 0.49\textwidth]{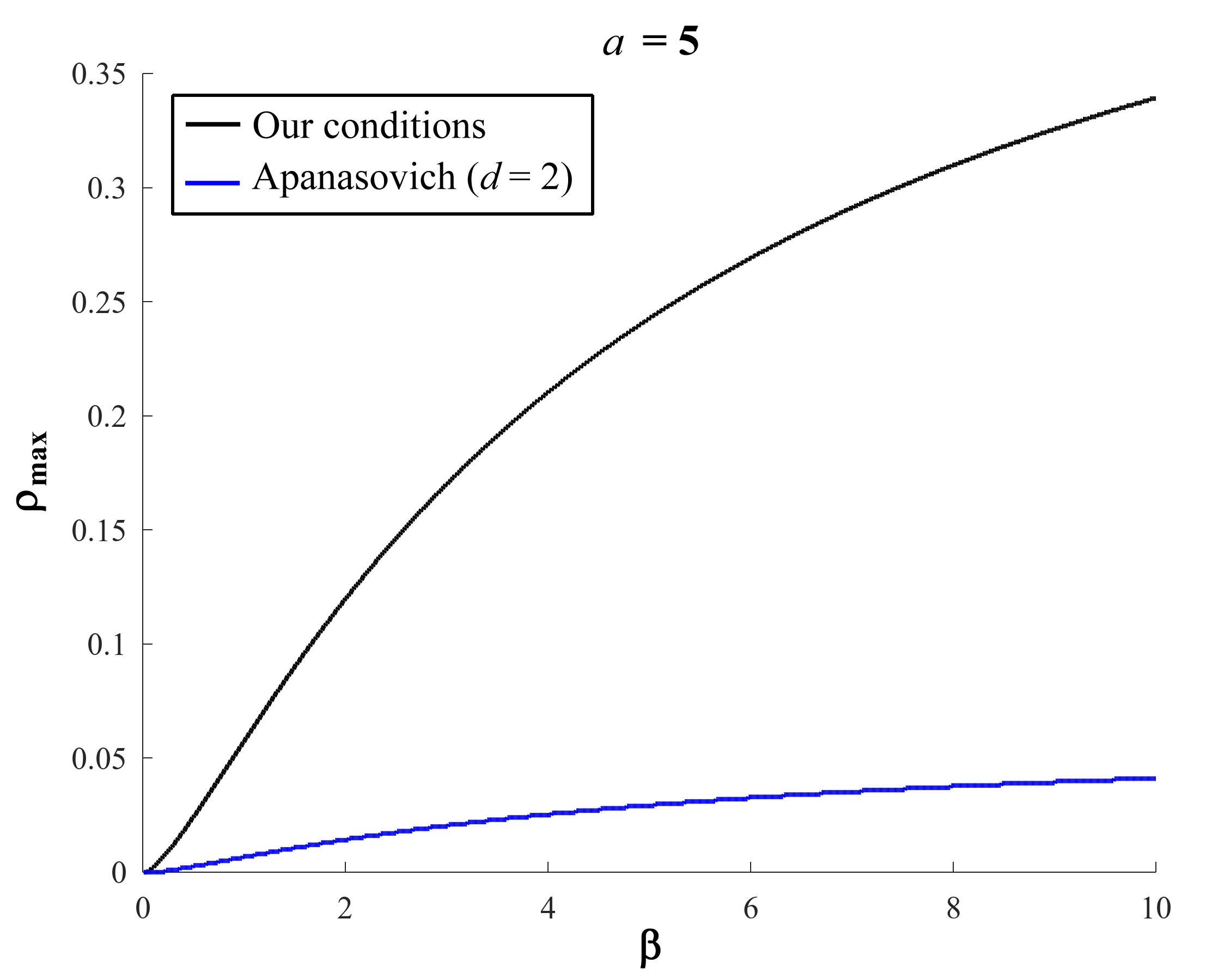}
    \caption{Upper bounds for $\rho$ as a function of $\beta$ for $a=0.5$ (left) and $a=5$ (right), based on Apanasovich's conditions and Conditions (B) of Theorem~\ref{validityMatern}}.
    \label{fig:caso3}
\end{figure}

\subsection{Third Example}

We consider the same spatial dimension ($d=2$) and parameterization for $\boldsymbol{\nu}$ and $\boldsymbol{\sigma}$ as in the previous example, but modify the scale parameters as follows:
$\alpha_{ij}^2 = a>0$ if $i=j$, $\ln(1+a)$ otherwise. In such a case, Conditions (ii) of \citet{Apanasovich} and (B.2) of Theorem~\ref{validityMatern} are not satisfied, since $\boldsymbol{\alpha}^2$ is not conditionally negative semidefinite ($\alpha_{ij}^2<\min(\alpha_{ii}^2,\alpha_{jj}^2)$ for $i \neq j$).

However, Conditions (A.1) and (A.2) of Theorem~\ref{validitymatern20} and Conditions (A.1) and (A.2) of Theorem~\ref{validityMatern} are satisfied with the chosen parameters, so that both theorems can be used to derive an upper bound for the collocated correlation coefficient $\rho$. Figure~\ref{fig:caso4} compares the maximal value for $\rho$ leading to a positive semidefinite matrix when applying Conditions (A.3) of Theorem~\ref{validitymatern20} and (A.3) of Theorem~\ref{validityMatern}, for $a$ ranging from $0.01$ to $10$. The upper bound found with the former condition is indistinguishable from that derived from the necessary and sufficient conditions provided by \citet[theorem 3]{Gneiting:Kleibler:Schlather:2010} in the bivariate setting. Gneiting's conditions apply only for $p=2$, whereas that of Theorems~\ref{validitymatern20} and~\ref{validityMatern} have been established for any positive integer $p$.

\begin{figure}
    \centering
\includegraphics[width = 0.65\textwidth]{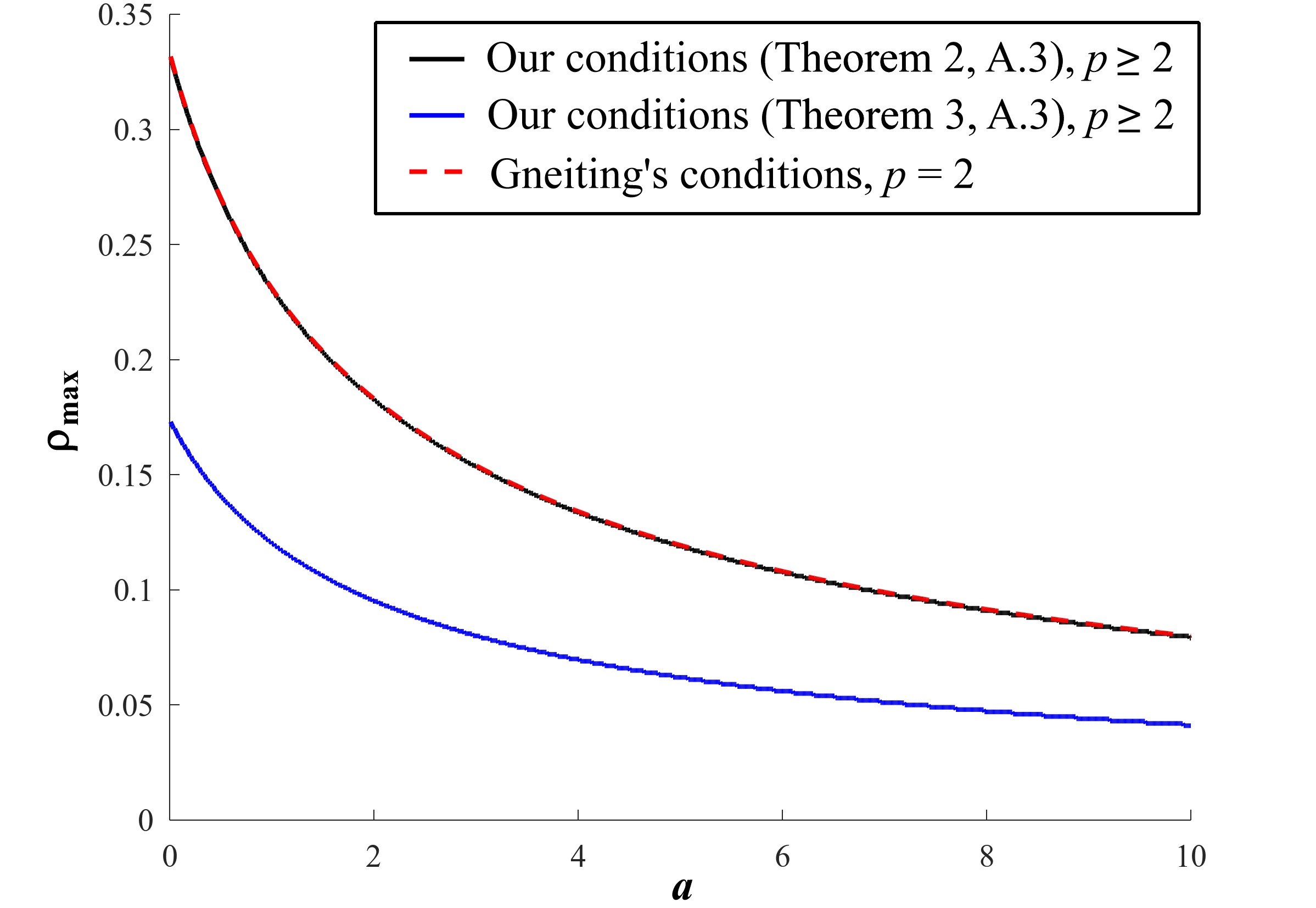}
    \caption{Upper bounds for $\rho$ as a function of $a$, based on Gneiting's conditions (valid for $p=2$ only), Conditions (A) of Theorem~\ref{validitymatern20} and Conditions (A) of Theorem~\ref{validityMatern} (valid for any $p$).}
    \label{fig:caso4}
\end{figure}

\section{Trivariate Geochemical Data Illustration} \label{sec6}

The goal of this section is to show that parameter estimation is feasible for the proposed parameterizations, and that fitting performances are competitive with respect to previous parameterizations proposed in earlier literature. Proposing innovative estimation techniques or comparing different existing techniques is not of interest here.

We consider a data set of geochemical concentrations measured at $n = 1752$ surface samples from a 5750 m by 2400 m area in southern Equator, aimed at identifying targets for mining exploration \citep{guartan2, guartan}. This data set was selected because it has empirical attributes that even challenge our extended validity constraints.
Here, we focus on the concentrations of bismuth (Bi), magnesium (Mg), and tellurium (Te), which are transformed to Gaussianity using an empirical normal scores transformation. We scale each transformed variable so that it has a sample mean of zero and a sample variance of one.
Tellurium and bismuth often co-occur, while magnesium appears at lower levels when tellurium and bismuth concentrations are high (Figure~\ref{fig:concentration}). The empirical binned direct and cross-variograms (Figure~\ref{fig:semi_over}) show significant spatial auto-correlations, cross-covariances, and nugget effects.

\begin{figure}[ht]
\centering
\includegraphics[width = 0.32\textwidth]{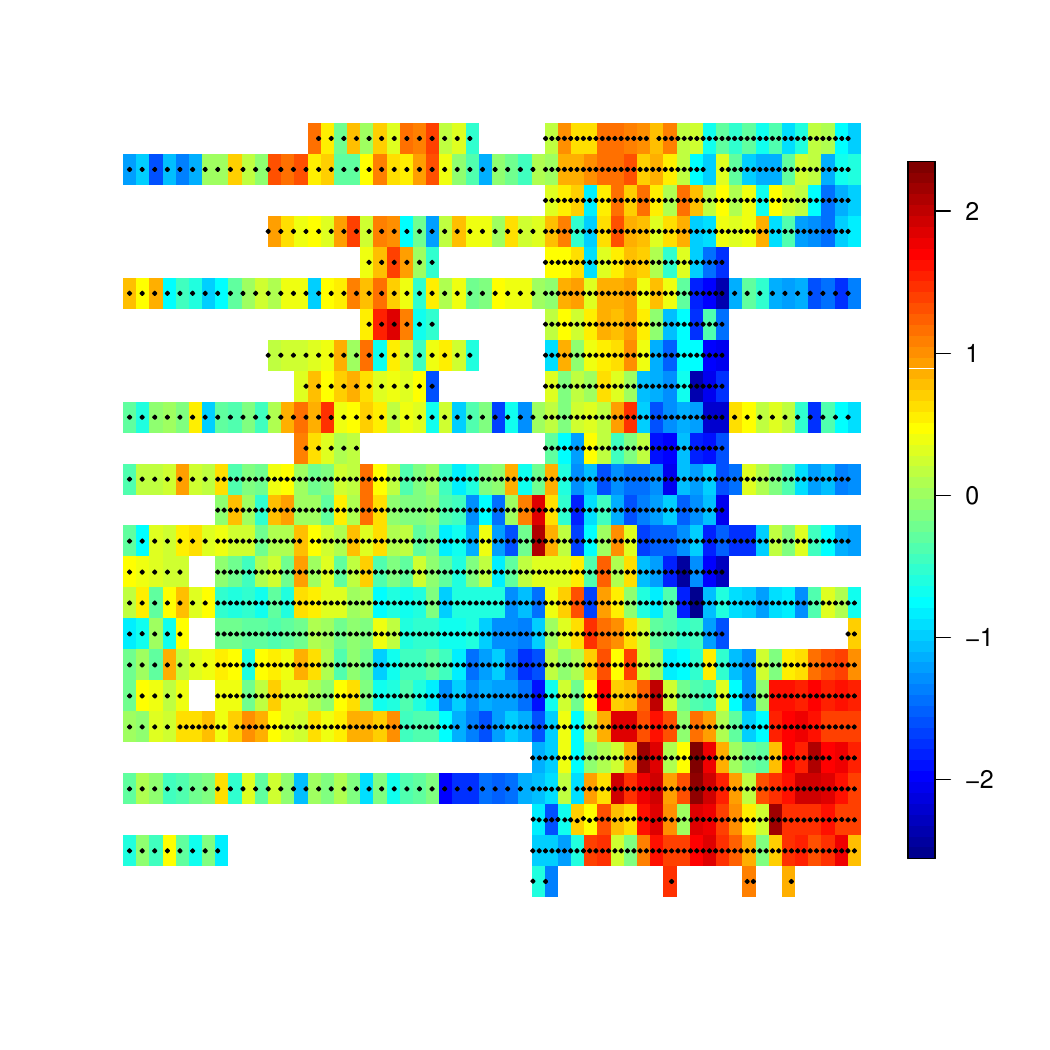}
\includegraphics[width = 0.32\textwidth]{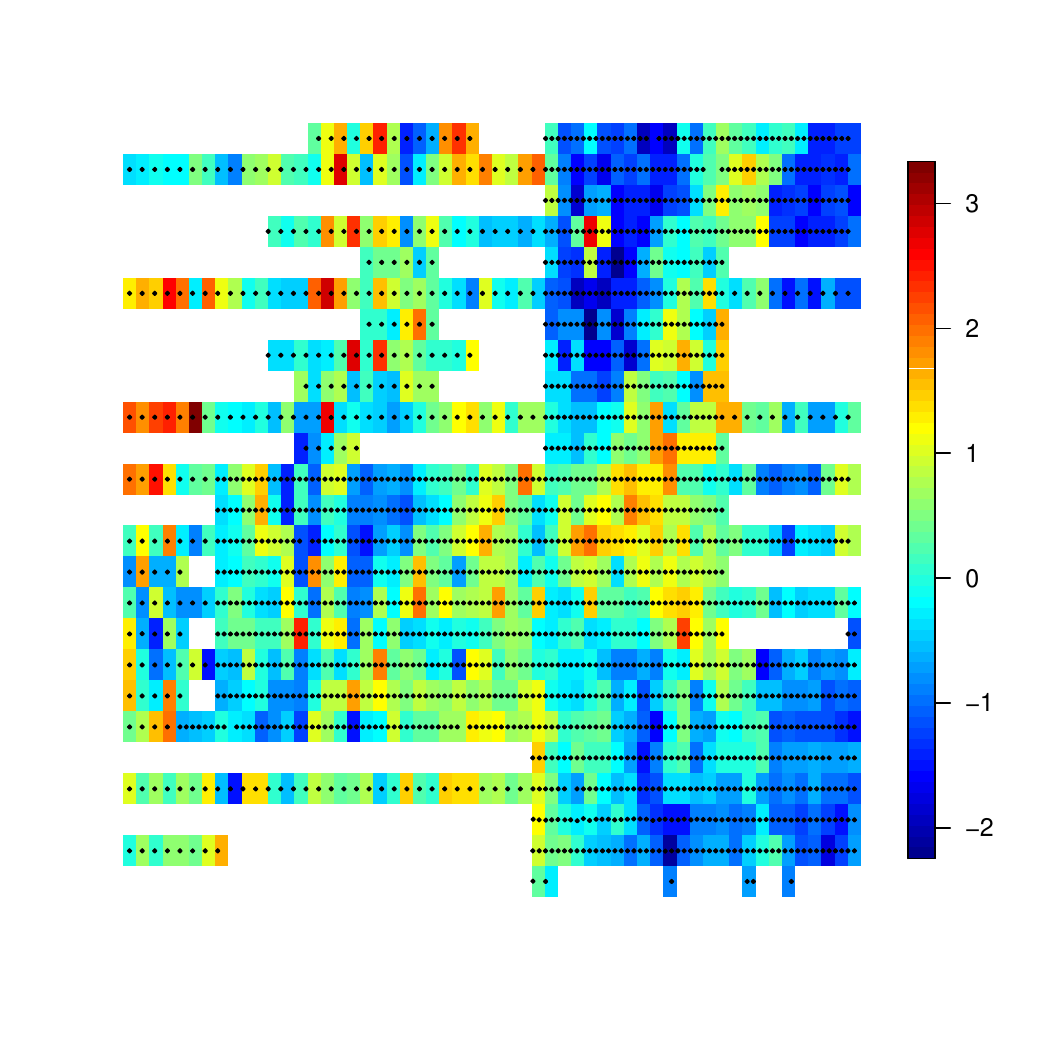}
\includegraphics[width = 0.32\textwidth]{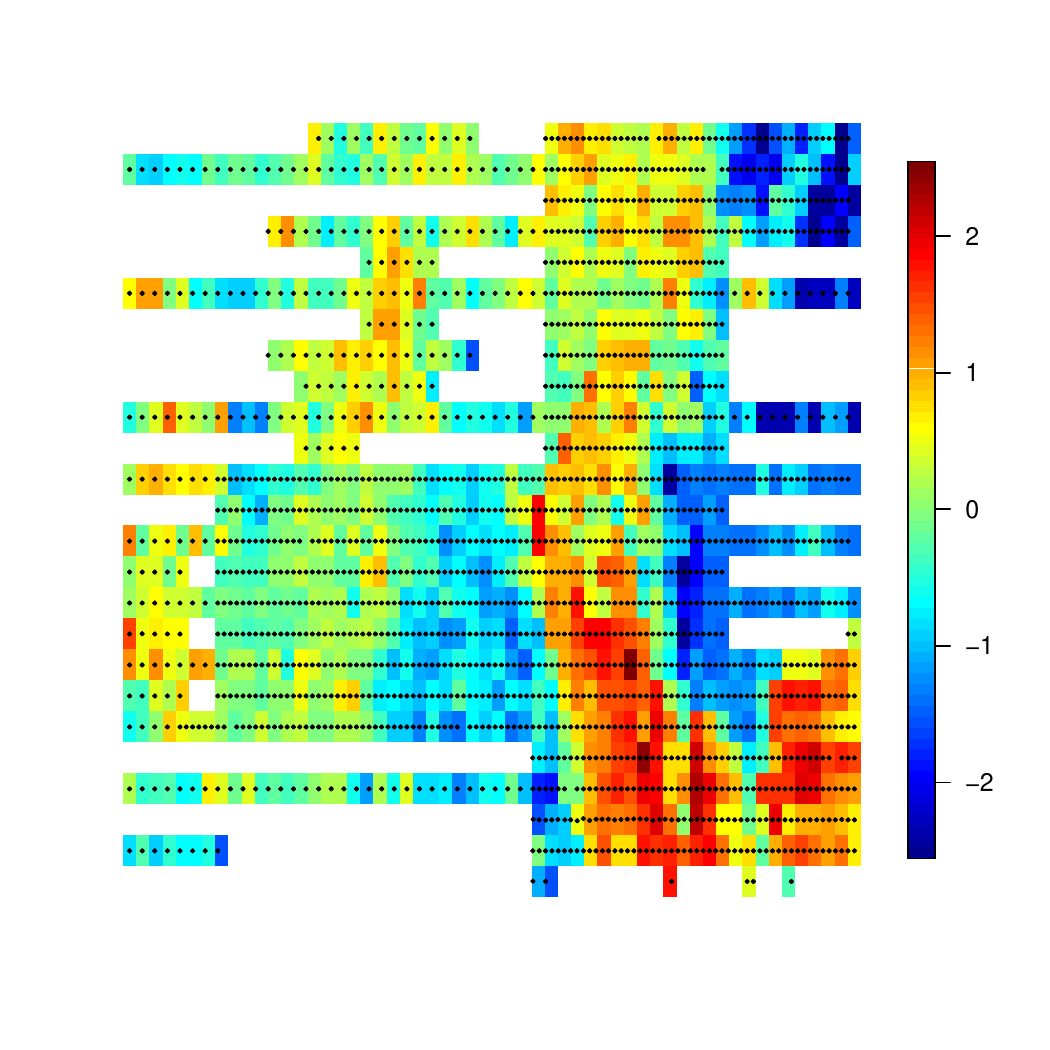}

\caption{Transformed concentrations of bismuth (left), magnesium (center), and tellurium (right).}\label{fig:concentration}
\end{figure}

As an exploratory tool, we fit exponential covariance models to the empirical direct and cross-variograms using weighted least-squares, with weights equal to the ratio of the number of pairs over the squared distance. Let $\hat{\boldsymbol{\alpha}}$ and $\hat{\boldsymbol{\sigma}}$ be the corresponding fitted scale and collocated covariance parameters. This data set is interesting because it matches some, but never all, of the constraints for our proposed classes, \cite{Apanasovich} and \cite{Gneiting:Kleibler:Schlather:2010}. We find that the entries of $\hat{\boldsymbol{\alpha}}$ differ by as much of a factor of two. We also find that $\hat{\boldsymbol{\alpha}}^{-2}$ is conditionally negative semidefinite, $\hat{\boldsymbol{\alpha}}^{2}$ is not conditionally negative semidefinite, $\hat{\boldsymbol{\sigma}}$ is positive semidefinite, $\hat{\boldsymbol{\sigma}} \hat{\boldsymbol{\alpha}}$ is positive semidefinite, and $\hat{\boldsymbol{\sigma}} \hat{\boldsymbol{\alpha}}^{-2}$ is not positive semidefinite.

We compare these attributes of the empirical fits to the constraints imposed by \cite{Gneiting:Kleibler:Schlather:2010}, \cite{Apanasovich}, and Example 1 of Theorem~\ref{validityMatern}A, under the assumption that $\boldsymbol{\nu}  =  \mathbf{1}/2$ outlined in Table 1. All notations and matrix operations are those defined in Table 1.

\begin{description}

\item[\textbf{\cite{Gneiting:Kleibler:Schlather:2010}}:] $\boldsymbol{\alpha} = \alpha \mathbf{1}$ and $\boldsymbol{\sigma} \geq 0$. Given the variability in $\hat{\boldsymbol{\alpha}}$, the constant scale parameter $\boldsymbol{\alpha} = \alpha \mathbf{1}$ may be too restrictive. Within the constraints of \cite{Gneiting:Kleibler:Schlather:2010}, $\hat{\boldsymbol{\sigma}}$ is positive semidefinite.

\item[\textbf{\cite{Apanasovich}}:] $\boldsymbol{\alpha}^{2} \leq 0$ and $\boldsymbol{\sigma} \boldsymbol{\alpha} \geq 0$. This corresponds to $\delta = 0$ and $a_{ij} = 1$ in \cite{Apanasovich}. Thus, the scale parameter constraint (ii) of \cite{Apanasovich} is not appropriate. However, the element-wise product of $\hat{\boldsymbol{\sigma}}$ and $\hat{\boldsymbol{\alpha}}$ is positive semidefinite, matching constraint (iii) of \cite{Apanasovich} when $\boldsymbol{\nu}  = \nu \mathbf{1}$, where $\nu = 1/2$.

\item[\textbf{Theorem~\ref{validityMatern}A, Example 1:}] $\boldsymbol{\alpha}^{-2} \leq 0$ and $\boldsymbol{\sigma} \boldsymbol{\alpha}^{-2} \geq 0$. $\hat{\boldsymbol{\alpha}}^{-2} \leq 0$, suggesting that our constraints may be preferable to those of \cite{Apanasovich} and \cite{Gneiting:Kleibler:Schlather:2010}. On the other hand, the element-wise product of $\hat{\boldsymbol{\sigma}}$ and $\hat{\boldsymbol{\alpha}}^{-2}$ is not positive semidefinite, meaning that our constraints do not match the empirical direct and cross-variogram fits perfectly.

\end{description}
Although each set of constraints fails to capture all of the empirical attributes of the data, our newly proposed constraints offer practical advantages over the constraints of \cite{Apanasovich} and \cite{Gneiting:Kleibler:Schlather:2010} for this data set, as is now shown.
We assume that the transformed metal concentrations $\bm{Z}(\boldsymbol{s}_i)$ at location $\boldsymbol{s}_i \in \mathbb{R}^2$, $i = 1,...,n$, are normally distributed with mean zero and
\begin{equation}
\text{Cov}\left(\bm{Z}( \boldsymbol{s}_i),\bm{Z}(\boldsymbol{s}_j)\right) = \textbf{K}\left(\boldsymbol{s}_i - \boldsymbol{s}_j; \boldsymbol{\alpha}, \frac{\mathbf{1}}{2},\boldsymbol{\sigma} \right) + \textbf{V} \, I\left(i = j\right ),
\end{equation}
where $\textbf{K}(\boldsymbol{s}_i - \boldsymbol{s}_j; \boldsymbol{\alpha}, \mathbf{1}/2,\boldsymbol{\sigma})$ is defined in \eqref{multi_cov}, $\textbf{V}$ is a positive semidefinite matrix, and $I(\cdot)$ is the indicator function representing a nugget effect covariance.

For each set of constraints, we fit a coregionalization model in a Bayesian framework assuming \textit{a priori} that $\boldsymbol{\alpha}$ and $\boldsymbol{\sigma}$ are uniformly distributed under the constraints discussed above. Therefore, although the models are the same, the parameter constraints imposed are not. Because of identifiability challenges for $\boldsymbol{\sigma}$ and $\boldsymbol{\alpha}$ \citep{zhang2004}, we also consider models with $\boldsymbol{\alpha}$ fixed on the configuration closest to $\hat{\boldsymbol{\alpha}}$, given the appropriate constraints. We have six models using the constraints discussed in Theorem~\ref{validityMatern}A Example 1, \cite{Gneiting:Kleibler:Schlather:2010}, and \cite{Apanasovich}, as well as letting $\boldsymbol{\alpha}$ be fixed or random. We emphasize that these models are the same except for the constraints on the covariance parameters.

To provide a familiar baseline comparison, we also fit a linear model of coregionalization (LMC). To specify the covariance function for this model, we use two spherical covariance functions with fixed ranges of 200 m and 1000 m, as well as a nugget effect. The nugget effect and these spatial ranges appear reasonable selections given the empirical variograms in Figure~\ref{fig:semi_over}. For each of these three terms, there are six parameters (partial sills), giving 18 in total. We assume flat prior distributions for these parameters. Thus, this LMC model has similar complexity compared to the multivariate Mat\'ern models against which it is compared.

We fit all the models using adaptive Markov chain Monte Carlo (MCMC) \citep{roberts2009}. Because the parameter values obtained by fitting the empirical variograms do not satisfy the multivariate Mat\'ern constraints, we initialize the MCMC at the closest parameter configuration that satisfies the appropriate constraints. We use a multivariate normal random walk for all parameters using the log-scale for positive parameters ($\boldsymbol{\alpha}$, as well as diagonals of $\boldsymbol{\sigma}$ and $\textbf{V}$) and the raw scale for other parameters (off-diagonals of $\boldsymbol{\sigma}$ and $\textbf{V}$). New parameter proposals are accepted with the probability given by the Metropolis algorithm. Because we impose the specific constraints required by \cite{Gneiting:Kleibler:Schlather:2010}, \cite{Apanasovich}, and Example 1 of Theorem~\ref{validityMatern}A through the prior distribution, any proposal that fails to meet the appropriate constraints is rejected because the proposal has an acceptance probability of 0. To check the conditional negative semidefiniteness of a matrix, we utilize Lemma~\ref{superlemma}.
We tune the covariance of the multivariate normal proposal distribution using the empirical covariance of previous samples and adding a small number to the diagonal of the empirical covariance \citep{haario2001}. We scale the proposal covariance to obtain an acceptance rate between 0.15 and 0.5. Using this approach, we run the MCMC sampler for 60,000 iterations, discarding the first 30,000 iterations.

We compare models using the deviance information criterion (DIC) \citep{spiegelhalter2002bayesian}. In many settings, other information criteria are better approximations for cross-validation than DIC \citep[see, e.g.,][]{watanabe2010asymptotic}; however, calculating the widely applicable information criterion (WAIC) requires arbitrary partitioning of the spatial domain \citep{gelman2014understanding}. In this case, the models are similar in form and complexity. The model comparison results are given in Table~\ref{tab:modcomp}.

First, we note that all of the models using the multivariate Mat\'ern covariance outperformed the LMC model by nearly 100 in terms of DIC. Thus, not only does the multivariate Mat\'ern covariance yield a practical advantage over the LMC, it also has more interpretable parameters. For fixed $\boldsymbol{\alpha}$, our model from the expanded constraints presented in Example 1 of Theorem~\ref{validityMatern}A provides the best model in terms of DIC by about 11. For random or unknown $\boldsymbol{\alpha}$, our model from Example 1 of Theorem~\ref{validityMatern}A is best by about 10 in terms of DIC. For models with fixed $\mathbf{\alpha}$ (1-3 in Table~\ref{tab:modcomp}), the models have essentially the same size not accounting for the imposed constraints. Therefore, differences greater than 10 in the DIC and posterior mean deviance $\overline{D}$ are striking as these differences correspond to added model flexibility given by Theorem~\ref{validityMatern}. For models with unknown or random $\mathbf{\alpha}$ (4-6 in Table~\ref{tab:modcomp}), the model by  \cite{Gneiting:Kleibler:Schlather:2010} has fewer parameters because it has only one range parameter. Despite this simplicity, it has the highest (worst) DIC. Although the model size under the constraints by  \cite{Apanasovich} and Example 1 of Theorem~\ref{validityMatern}A are the same, we observe differences in DIC and posterior mean deviance greater than 10.
In summary, this data set provides an example where the more flexible constraints presented herein offer a better fit to the data than the constraints in \cite{Gneiting:Kleibler:Schlather:2010} or \cite{Apanasovich}.

\begin{table}[ht]
\centering
\scriptsize
\begin{tabular}{rlllll}
  \hline
 & Constraints & $\boldsymbol{\alpha}$ & Mean Deviance $\overline{D}$ & Model Complexity $p_D$ &  $ \text{DIC} = \overline{D} + p_D$ \\
  \hline
1 & Gneiting & Fixed & 9505.65 & 10.34 & 9515.99 \\
2 & Apanasovich & Fixed & 9504.09 & 11.87 & 9515.95 \\
3 & Th. 3A Ex.1 & Fixed & 9494.11 & 10.65 & 9504.75 \\
  \hline
4 & Gneiting & Random & 9486.63 & 12.78 & 9499.42 \\
5 & Apanasovich & Random & 9484.13 & 11.60 & 9495.73  \\
6 & Th. 3A Ex.1 & Random& 9472.56 & 13.15 & 9485.71 \\
  \hline
7 & LMC & ---------- & 9595.04 & 16.90 & 9611.94 \\
   \hline
\end{tabular}
\caption{Model comparison results. Constraints represent the parameter constraints for the three models. We compare models with fixed or unknown $\boldsymbol{\alpha}$. All components of DIC are provided: Mean Deviance $\overline{D}$, Model Complexity $p_D$, and $ \text{DIC} = \overline{D} + p_D$.}\label{tab:modcomp}
\end{table}

For the lowest DIC model, we plot realizations of the variogram for the posterior samples as a function of the lag separation distance in Figure~\ref{fig:semi_over}. We thin the 30,000 post-burn posterior samples to 1000 samples and plot the corresponding direct and cross-variograms in Figure~\ref{fig:semi_over}. Three to four variograms align relatively well with the empirical variograms. We suggest two possible reasons why the variogram fits may not appear ideal. First, limitations on the joint identifiability of scale and collocated covariance parameters are well-studied for the Mat\'ern covariance class \citep{zhang2004}. That is, equal values for $\boldsymbol{\alpha} \boldsymbol{\sigma}$ yield equivalent Gaussian measures and interpolation. However, these non-identifiable combinations of scale and collocated covariance parameters yield different variograms. Thus, if there is a goal of recovering the empirical variogram, one possible remedy includes fixing range parameters subject to values near the least-squares solution. However, as seen in Table~\ref{tab:modcomp}, fixing the range parameters is detrimental to the model performance. Second, this data set was selected to challenge even our improved parametric constraints, as well as those in \cite{Gneiting:Kleibler:Schlather:2010} and \cite{Apanasovich}. As discussed, the element-wise product of $\hat{\boldsymbol{\sigma}}$ and $\hat{\boldsymbol{\alpha}}^{-2}$ is not positive semidefinite; therefore, this constraint from Theorem~\ref{validityMatern}A may limit how well the Mat\'ern model fits the data. That said, our proposed constraints in Example 1 of Theorem~\ref{validityMatern} outperform the current state of the art for the multivariate Mat\'ern model.

\begin{figure}[H]
\includegraphics[width=0.325\textwidth]{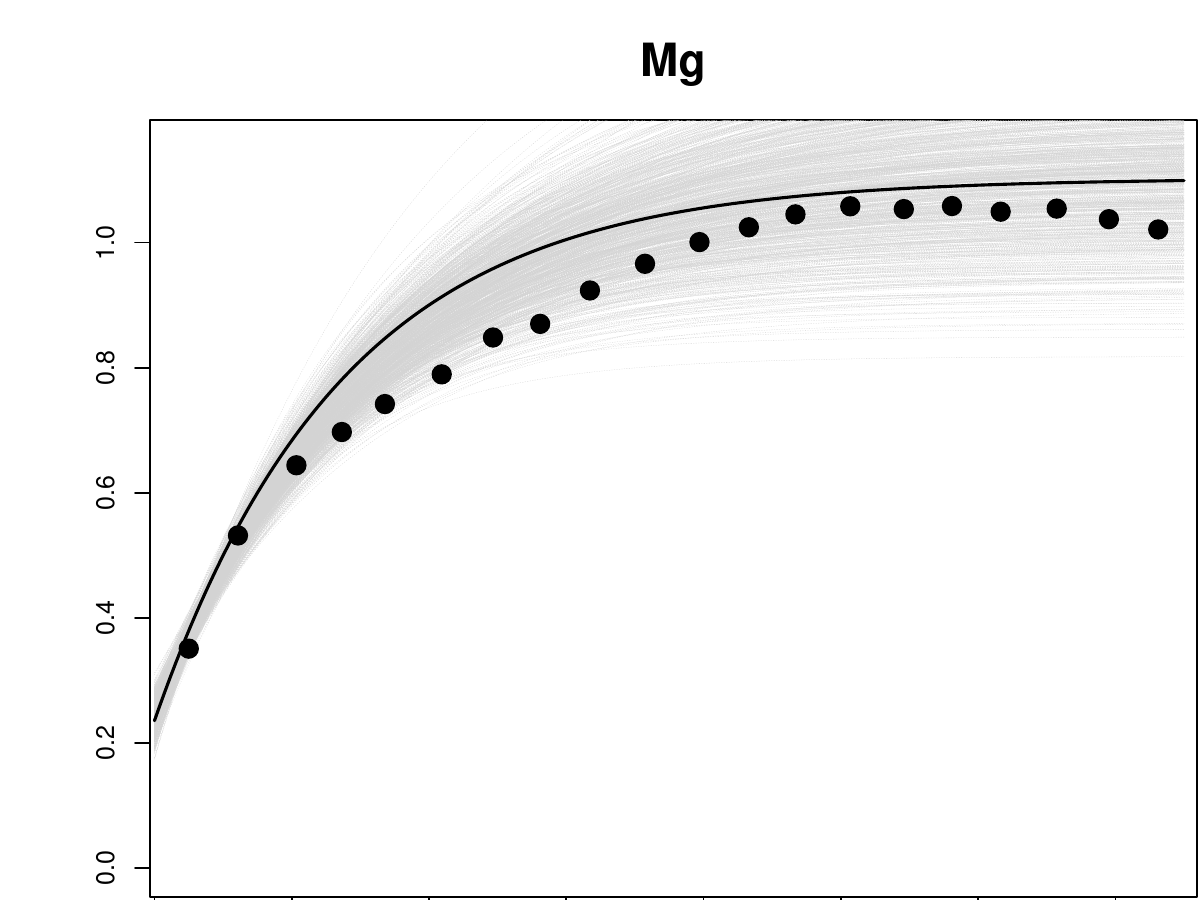} \\
\includegraphics[width=0.325\textwidth]{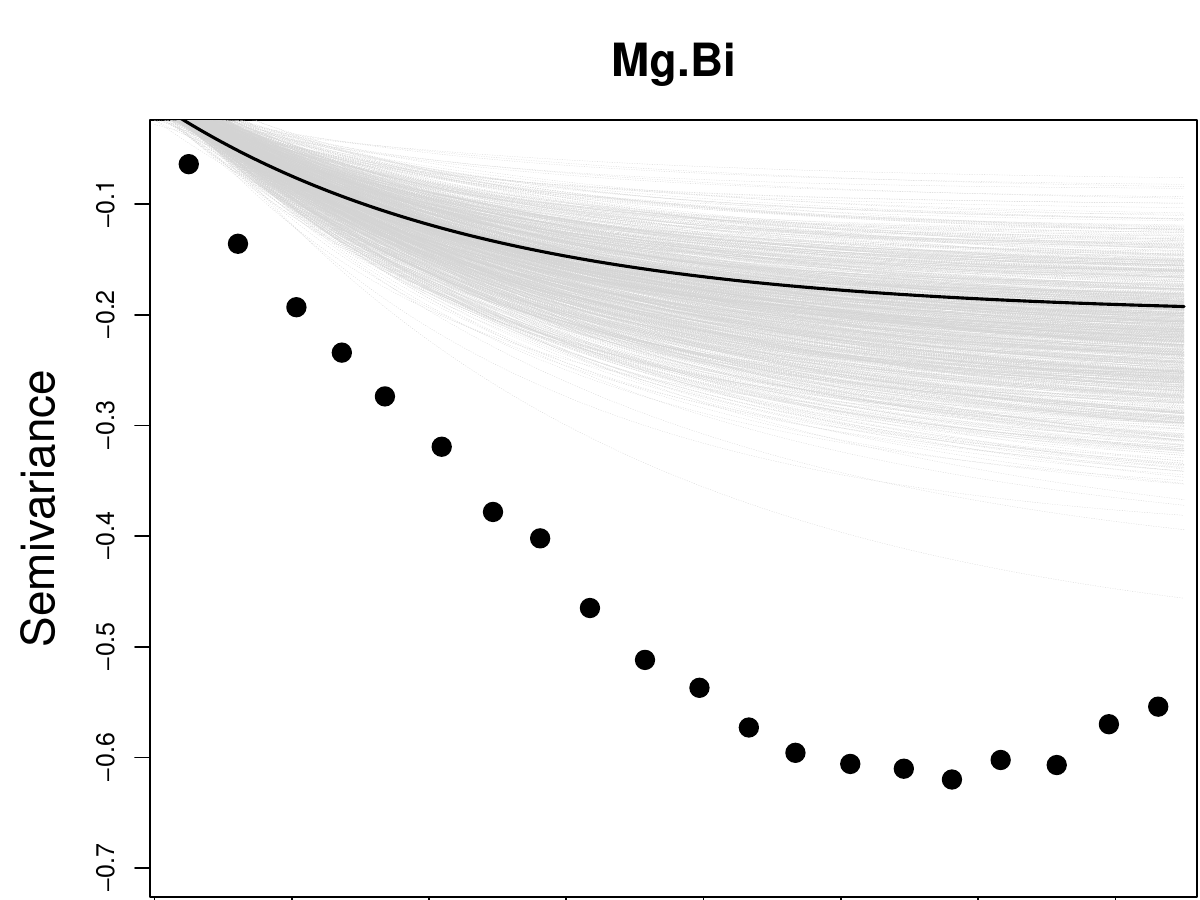}
\includegraphics[width=0.325\textwidth]{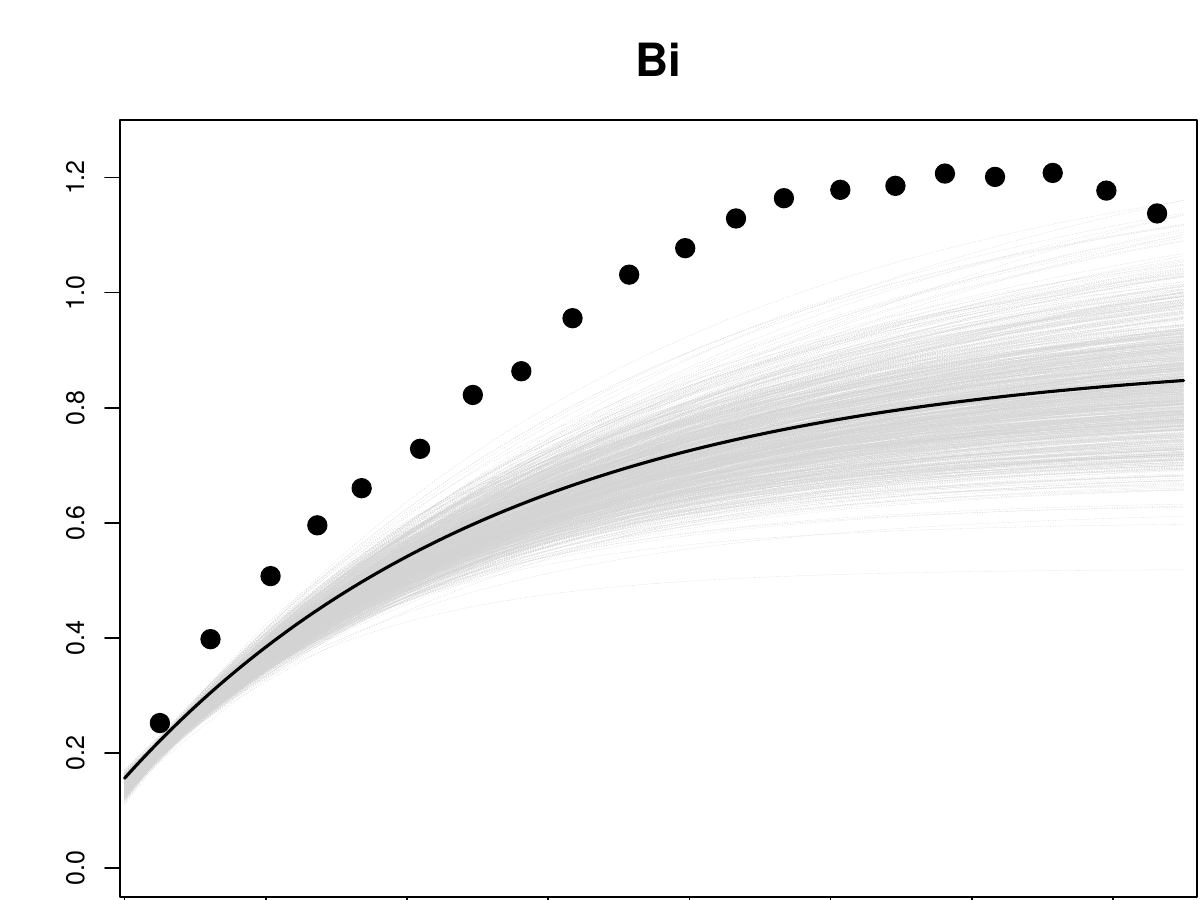} \\
\includegraphics[width=0.325\textwidth]{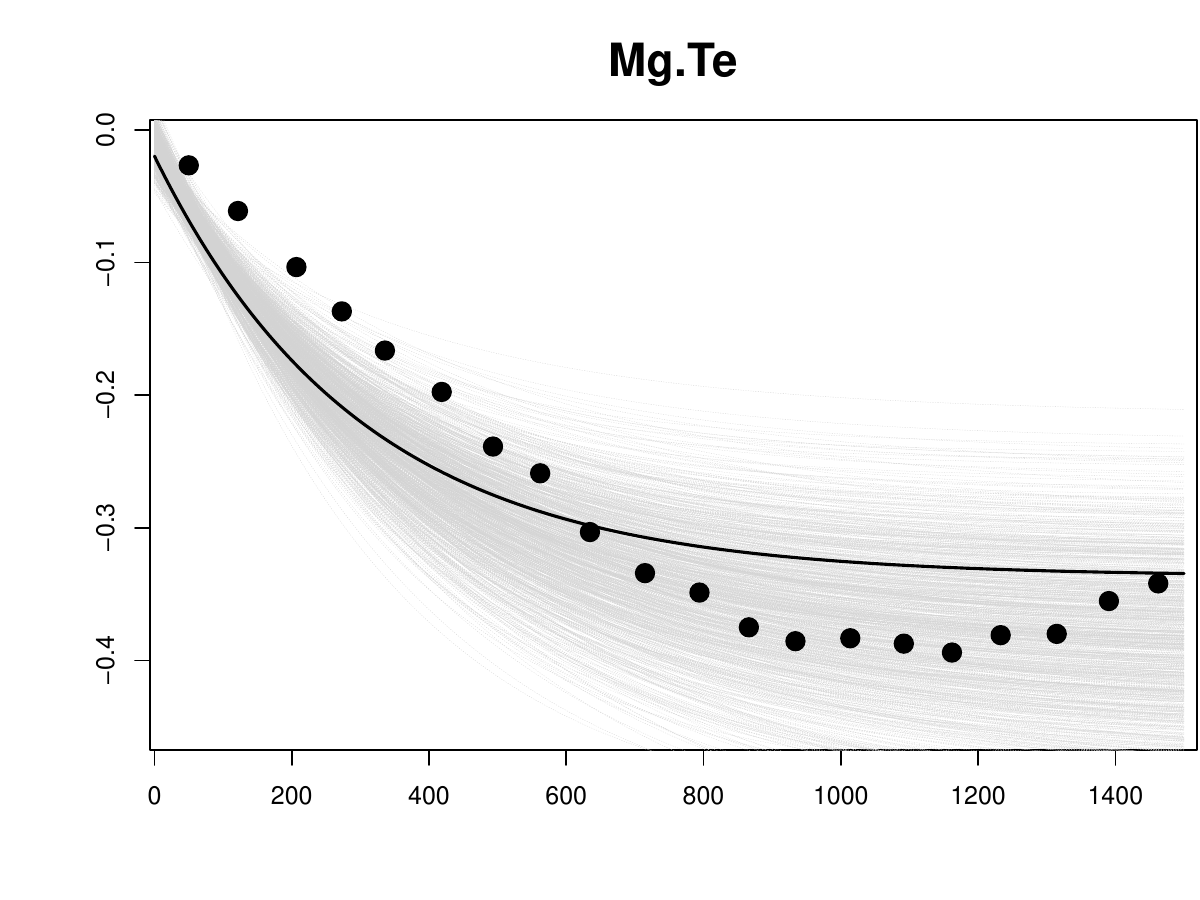}
\includegraphics[width=0.325\textwidth]{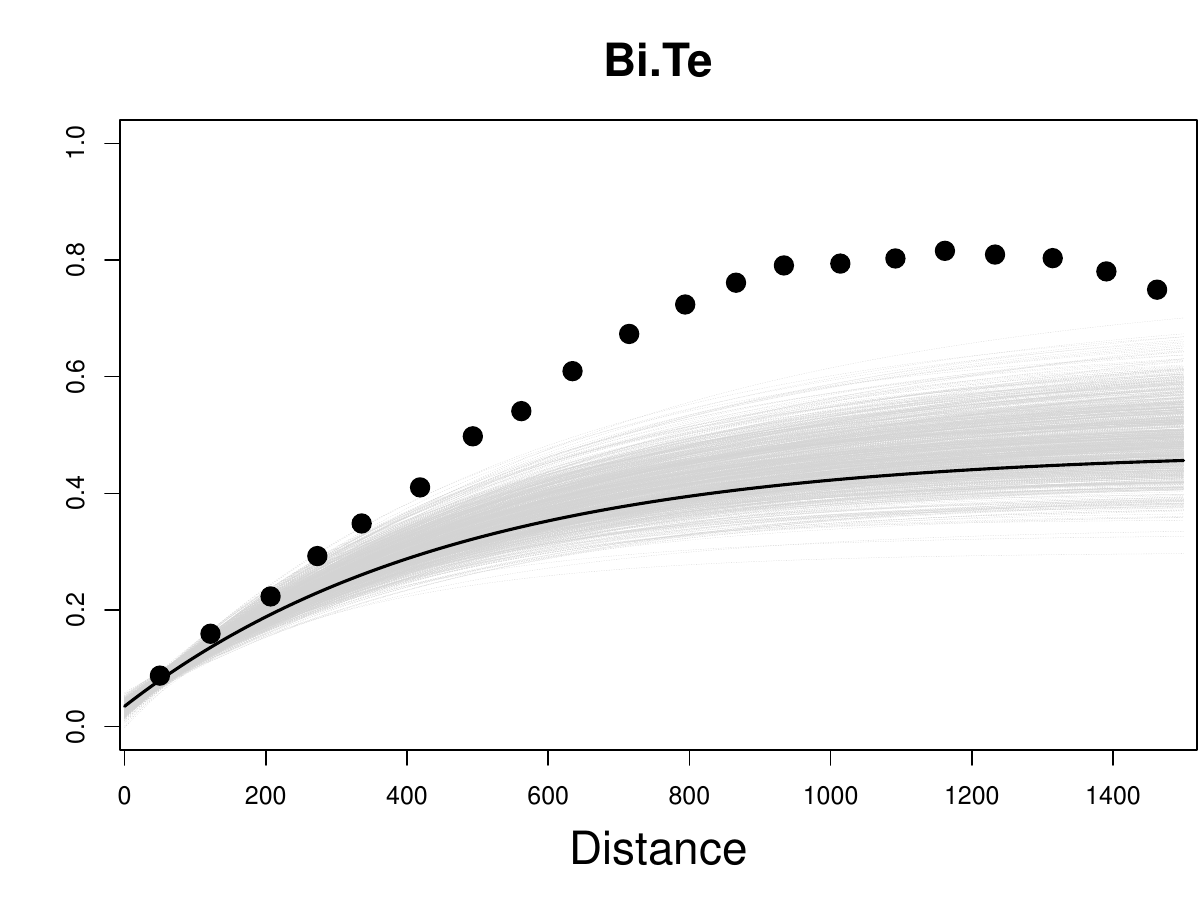}
\includegraphics[width=0.325\textwidth]{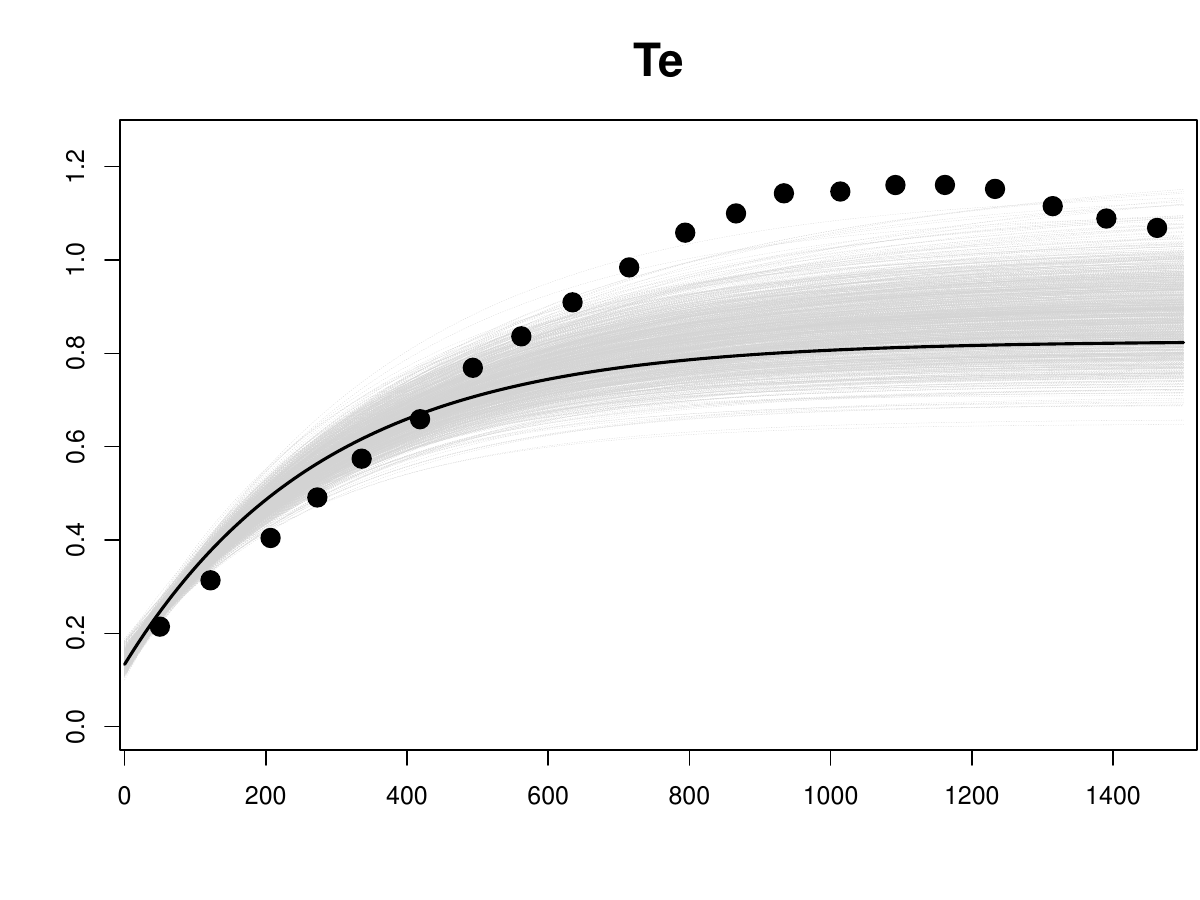}
\caption{Empirical variograms (points) with 1000 posterior variograms superimposed (dashed gray lines). We also include the posterior mean variogram as a solid black line.}\label{fig:semi_over}
\end{figure}

\section{Conclusions} \label{sec7}

While \cite{Gneiting:Kleibler:Schlather:2010} completely characterized the multivariate Mat\'ern model for the case $p=2$, when $p>2$ the sufficient conditions proposed first by \cite{Gneiting:Kleibler:Schlather:2010} and later on by \cite{Du2012} and \cite{Apanasovich} are overly restrictive. This paper has provided improved conditions (Table~\ref{resume}) that allow for a more flexible modeling using the multivariate Mat{\'e}rn covariance and that have been established by use of spectral and scale mixture representations of the Mat\'ern covariance. The loosest conditions are clearly those given in Theorem~\ref{validitymatern20}, of which Theorem~\ref{validityMatern} and the related Examples 1 to 3 are particular cases under simplified parameterizations. Theorem~\ref{validityexponentialdD} provides another set of conditions that ensure the validity of the multivariate Mat\'ern model, which, although restricted to direct and cross-covariances having the same smoothness parameter, can outperform the conditions provided in Theorem~\ref{validitymatern20} for this particular case.

The examples presented in Section~\ref{sec5} give a clear illustration of how the conditions provided by \cite{Apanasovich} can become overly severe, especially in terms of upper bound for the collocated correlation coefficients. In particular, our examples show that one can attain much higher upper bounds for the collocated correlation coefficient, resulting in more flexible versions of the multivariate Mat{\'e}rn model. In some circumstances, the conditions found for the case $p>2$ are as good as in the case $p=2$ provided by \cite{Gneiting:Kleibler:Schlather:2010}. The greater flexibility is also reflected in Section~\ref{sec6} where our parameterization was successful compared to our competitors on a trivariate data set.

Modeling multivariate spatial data sets has an obvious computational burden, in particular when $p>2$, as the number of parameters to be estimated becomes massive. An alternative to least-squares fitting is to proceed through Bayesian routines coupled with MCMC techniques, as we have shown in Section~\ref{sec6}.

\section*{Acknowledgements}

This work was supported by the National Agency for Research and Development of Chile [grants ANID/FONDECYT/REGULAR/No.\ 1210050 and ANID PIA AFB180004].

\appendix



\section{Technical Lemmas with their Proofs}
\label{app1}

From a notation viewpoint, we recall that, in what follows, all the matrix operations (product, division, inverse, power, exponential, integration, etc.) are taken element-wise.

\begin{lem}
\label{superlemma}
Let $\boldsymbol{A}=[a_{ij}]_{i,j=1}^p$ be a symmetric real-valued matrix. The following assertions are equivalent \citep{Schoenberg1938, Berg, Reams}:
\begin{itemize}
    \item $\boldsymbol{A}$ is conditionally negative semidefinite;
    \item $\exp(- t \, \boldsymbol{A})$ is positive semidefinite for all $t \geq 0$;
    \item $[a_{ip}+a_{pj}-a_{ij}-a_{pp}]_{i,j=1}^p$ is positive semidefinite. \\
\end{itemize}
\end{lem}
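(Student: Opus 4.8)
The plan is to establish the cycle of implications $1 \Rightarrow 3 \Rightarrow 2 \Rightarrow 1$, which yields the equivalence of all three assertions. Throughout, $\exp(-t\boldsymbol{A}) = [\exp(-t a_{ij})]_{i,j=1}^p$ and all other matrix operations are element-wise, as in the rest of the paper. The two standard ingredients I will invoke are the Gram factorization of a positive semidefinite matrix and the Schur product theorem (the element-wise product of positive semidefinite matrices is positive semidefinite), together with the fact that the cone of positive semidefinite matrices is closed under nonnegative linear combinations and entrywise limits.

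For the step $1 \Leftrightarrow 3$, I write $\boldsymbol{B} = [b_{ij}]_{i,j=1}^p$ with $b_{ij} = a_{ip} + a_{pj} - a_{ij} - a_{pp}$ and first observe that its $p$-th row and $p$-th column vanish identically, $b_{ip} = b_{pj} = 0$. Given any $\boldsymbol{\mu} \in \mathbb{R}^p$, I set $\lambda_i = \mu_i$ for $i < p$ and $\lambda_p = -\sum_{i<p}\mu_i$, so that $\sum_i \lambda_i = 0$. Because $\sum_i \lambda_i = 0$, the three ``linear'' terms in $b_{ij}$ contribute nothing to the quadratic form and $\sum_{i,j}\lambda_i\lambda_j b_{ij} = -\sum_{i,j}\lambda_i\lambda_j a_{ij}$; simultaneously, the vanishing of the $p$-th row and column of $\boldsymbol{B}$ gives $\sum_{i,j}\lambda_i\lambda_j b_{ij} = \sum_{i,j}\mu_i\mu_j b_{ij}$. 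Hence $\boldsymbol{A}$ conditionally negative semidefinite forces $\sum_{i,j}\mu_i\mu_j b_{ij} \geq 0$, i.e.\ $\boldsymbol{B}$ positive semidefinite, and reading the same identities in reverse gives the converse.

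For $3 \Rightarrow 2$, I factor $\boldsymbol{B}$ as a Gram matrix, $b_{ij} = \langle \boldsymbol{v}_i, \boldsymbol{v}_j\rangle$, and put $c_i = a_{ip}$, so that $a_{ij} = c_i + c_j - a_{pp} - \langle \boldsymbol{v}_i, \boldsymbol{v}_j\rangle$. It suffices to treat $t = 1$, since $t\boldsymbol{A}$ is conditionally negative semidefinite for every $t \geq 0$ (and $t = 0$ gives the all-ones matrix, which is positive semidefinite). Then $\exp(-a_{ij}) = e^{a_{pp}}\,(e^{-c_i})(e^{-c_j})\,\exp(\langle \boldsymbol{v}_i,\boldsymbol{v}_j\rangle)$: the first factor is a positive scalar, $[(e^{-c_i})(e^{-c_j})]_{i,j}$ is a rank-one positive semidefinite matrix, and $[\exp(\langle\boldsymbol{v}_i,\boldsymbol{v}_j\rangle)]_{i,j} = \sum_{n\geq 0}\frac{1}{n!}[\langle\boldsymbol{v}_i,\boldsymbol{v}_j\rangle^n]_{i,j}$ is positive semidefinite because each $[\langle\boldsymbol{v}_i,\boldsymbol{v}_j\rangle^n]_{i,j}$ is an $n$-th Hadamard power of the positive semidefinite Gram matrix $[\langle\boldsymbol{v}_i,\boldsymbol{v}_j\rangle]_{i,j}$. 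An element-wise product of positive semidefinite matrices being positive semidefinite, this closes the step.

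Finally, for $2 \Rightarrow 1$, I fix $\boldsymbol{\lambda}$ with $\sum_i\lambda_i = 0$ and expand $\sum_{i,j}\lambda_i\lambda_j\exp(-t a_{ij}) = \sum_{i,j}\lambda_i\lambda_j\bigl(1 - t a_{ij} + O(t^2)\bigr) = -t\sum_{i,j}\lambda_i\lambda_j a_{ij} + O(t^2)$ as $t \to 0^+$, the constant term vanishing since $\sum_i\lambda_i = 0$. Nonnegativity of the left-hand side for all $t \geq 0$, divided by $t > 0$ and taken to the limit, yields $\sum_{i,j}\lambda_i\lambda_j a_{ij} \leq 0$. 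I expect the only place requiring care to be the bookkeeping in $3 \Rightarrow 2$ (the Gram factorization coupled with the Schur-product/exponential-series argument), but since everything is finite-dimensional no analytic subtlety arises, and these are classical facts; see \citet{Schoenberg1938, Berg, Reams}.
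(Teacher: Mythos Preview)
Your proof is correct and follows the classical route (essentially the arguments in \citet{Berg} and \citet{Reams}). Note, however, that the paper does not supply its own proof of this lemma: it is stated as a background result with citations to \citet{Schoenberg1938}, \citet{Berg}, and \citet{Reams}, so there is no in-paper argument to compare against. Your cycle $1\Leftrightarrow 3$, $3\Rightarrow 2$, $2\Rightarrow 1$ is the standard one; the only cosmetic remark is that in the step $3\Rightarrow 2$ the reduction to $t=1$ is unnecessary, since the decomposition $a_{ij}=c_i+c_j-a_{pp}-\langle\boldsymbol{v}_i,\boldsymbol{v}_j\rangle$ immediately gives $\exp(-t a_{ij})=e^{t a_{pp}}\,e^{-t c_i}e^{-t c_j}\exp(t\langle\boldsymbol{v}_i,\boldsymbol{v}_j\rangle)$ for every $t\geq 0$, and the same Schur-product argument applies directly.
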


\begin{lem}
\label{lemma2}
If $\boldsymbol{A}=[a_{ij}]_{i,j=1}^p$ is a conditionally negative semidefinite matrix with positive entries, then $[a^{-1}_{ij}]_{i,j=1}^p$ is positive semidefinite and $[a^{\mu}_{ij}]_{i,j=1}^p$ is conditionally negative semidefinite for any $\mu \in (0,1]$.
\end{lem}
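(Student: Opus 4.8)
The plan is to treat the two claims separately, but both via the same classical machinery: conditional negative semidefiniteness of $\boldsymbol{A}$ with positive entries is equivalent (Lemma~\ref{superlemma}) to $\exp(-t\boldsymbol{A})$ being positive semidefinite for all $t\ge 0$, and the class of positive semidefinite matrices is closed under the Schur (element-wise) product and under integration against a nonnegative measure. For the first claim, that $[a_{ij}^{-1}]$ is positive semidefinite, I would use the integral identity $a^{-1}=\int_0^{\infty}\exp(-t a)\,\mathrm{d}t$, valid for $a>0$, which applies entrywise since all $a_{ij}>0$. Hence $[a_{ij}^{-1}]_{i,j=1}^p = \int_0^{\infty}\big[\exp(-t a_{ij})\big]_{i,j=1}^p\,\mathrm{d}t$, an integral of positive semidefinite matrices (by Lemma~\ref{superlemma}), and therefore positive semidefinite. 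One should note the interchange of integration and the quadratic form $\boldsymbol{\lambda}^\T(\cdot)\boldsymbol{\lambda}$ is justified because the integrand is continuous in $t$ and bounded by $\sum_{i,j}|\lambda_i\lambda_j|\exp(-t\min_{k,l}a_{kl})$, which is integrable.

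For the second claim, that $[a_{ij}^{\mu}]$ is conditionally negative semidefinite for $\mu\in(0,1]$, I would again reach for an integral representation, this time for the power function: for $\mu\in(0,1)$ there is a constant $c_\mu>0$ with
\begin{equation}
\label{powerrep}
a^{\mu} = c_\mu \int_0^{\infty} \big(1-\exp(-t a)\big)\, t^{-1-\mu}\,\mathrm{d}t, \qquad a>0.
\end{equation}
Writing $b_{ij}(t) := 1-\exp(-t a_{ij})$, one checks directly that $[b_{ij}(t)]_{i,j=1}^p$ is conditionally negative semidefinite for every $t\ge 0$: for any $\boldsymbol{\lambda}$ with $\sum_i\lambda_i=0$ we have $\sum_{i,j}\lambda_i\lambda_j\,b_{ij}(t) = -\sum_{i,j}\lambda_i\lambda_j\exp(-t a_{ij})\le 0$, the constant term $1$ contributing nothing because $\big(\sum_i\lambda_i\big)^2=0$, and the remaining sum being $\le 0$ by positive semidefiniteness of $\exp(-t\boldsymbol{A})$. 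Since the set of conditionally negative semidefinite matrices is a convex cone closed under pointwise limits (as recalled in the background section), integrating \eqref{powerrep} entrywise against the nonnegative measure $c_\mu t^{-1-\mu}\,\mathrm{d}t$ keeps us in that cone; the case $\mu=1$ is immediate since $\boldsymbol{A}$ itself is assumed conditionally negative semidefinite. Again one should record that the quadratic form may be moved inside the integral — here the integrand $\big(\sum_{i,j}\lambda_i\lambda_j b_{ij}(t)\big)t^{-1-\mu}$ is dominated near $0$ using $|1-\exp(-ta_{ij})|\le t a_{ij}$ and near $\infty$ using $|1-\exp(-ta_{ij})|\le 1$, both giving integrability against $t^{-1-\mu}\,\mathrm{d}t$ on the respective ranges for $\mu\in(0,1)$.

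I do not anticipate a serious obstacle: both halves are short once the right integral representation is in hand, and the only care needed is the routine justification of exchanging the quadratic form with the integral, which I would handle with the elementary domination bounds sketched above. An alternative, essentially equivalent route for the second claim is to invoke directly the fact that $t\mapsto t^{\mu}$ is a Bernstein function for $\mu\in(0,1]$ and compose it with the conditionally negative semidefinite structure, but the explicit representation \eqref{powerrep} makes the argument self-contained and avoids appealing to the full Bernstein-function calculus.
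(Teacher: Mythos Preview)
Your argument is correct: both claims follow cleanly from Lemma~\ref{superlemma} together with the integral representations $a^{-1}=\int_0^{\infty}\exp(-ta)\,\mathrm{d}t$ and $a^{\mu}=c_\mu\int_0^{\infty}(1-\exp(-ta))\,t^{-1-\mu}\,\mathrm{d}t$, and the domination bounds you give suffice to justify the interchanges. The paper does not actually present its own proof of this lemma but simply cites \citet[corollary 2.10 and exercise 2.21]{Berg}; the explicit argument you wrote is precisely the standard one underlying those references, so in substance your approach coincides with what the paper defers to.
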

The proof of this lemma can be found in \citet[corollary 2.10 and exercise 2.21]{Berg}.\\

\begin{lem}
\label{lemma3}
The multivariate Mat\'ern model \eqref{multi_cov}
is valid in $\mathbb{R}^d$, $d \geq 1$, if
\begin{itemize}
    \item [(1)] $\boldsymbol{\nu}$ is conditionally negative semidefinite;
    \item [(2)] ${\bf K}({\boldsymbol{h}};\boldsymbol{\alpha},\boldsymbol{\nu}+\mu,\Gamma(\boldsymbol{\nu}+\mu) \, \Gamma(\boldsymbol{\nu})^{-1} \, \boldsymbol{\sigma})$ is a valid model in $\mathbb{R}^d$ for some positive constant $\mu$, where the exponent $-1$ stands for the element-wise inverse and $\Gamma$ for the gamma function.
\end{itemize}
\end{lem}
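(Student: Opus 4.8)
The plan is to realise $\mathbf{K}(\cdot;\boldsymbol{\alpha},\boldsymbol{\nu},\boldsymbol{\sigma})$ as an integral, against a nonnegative measure, of valid $p$-variate covariances built from the shifted model, and then to invoke the closedness of the convex cone of valid multivariate covariances. The cornerstone is the scalar identity
\begin{equation}
\label{betaident}
k(\boldsymbol{h};\alpha,\nu)=\frac{1}{B(\nu,\mu)}\int_{0}^{1}t^{\nu-1}(1-t)^{\mu-1}\,k\!\left(\boldsymbol{h};\frac{\alpha}{\sqrt{t}},\nu+\mu\right)\mathrm{d}t,\qquad \boldsymbol{h}\in\mathbb{R}^d,
\end{equation}
valid for all $\alpha,\nu,\mu>0$, where $B$ denotes the beta function. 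I would prove \eqref{betaident} either by inserting Euler's integral representation of $(1+x)^{-\nu-d/2}$ into the spectral density \eqref{spectraldensitymatern} (recognising $1+t\|\boldsymbol{\omega}\|^2/\alpha^2=1+\|\boldsymbol{\omega}\|^2/(\alpha/\sqrt t)^2$) and inverting the Fourier transform, or, more transparently, from the scale mixture \eqref{integral_repr} together with the beta--gamma algebra: writing $U_\nu$ for an inverse-gamma variable with shape $\nu$ and scale $\alpha^2/4$, one has $1/U_\nu \overset{d}{=} B_{\nu,\mu}/U_{\nu+\mu}$ with $B_{\nu,\mu}\sim\mathrm{Beta}(\nu,\mu)$ independent of $U_{\nu+\mu}$, whence $k(\boldsymbol{h};\alpha,\nu)=\mathbb{E}\,g(\boldsymbol{h};U_\nu)=\mathbb{E}\,g(\boldsymbol{h}/\sqrt{B_{\nu,\mu}};U_{\nu+\mu})=\mathbb{E}\,k(\boldsymbol{h};\alpha/\sqrt{B_{\nu,\mu}},\nu+\mu)$; a check at $\boldsymbol{h}=\boldsymbol{0}$ fixes the normalisation.

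Applying \eqref{betaident} entrywise to $\mathbf{K}(\boldsymbol{h};\boldsymbol{\alpha},\boldsymbol{\nu},\boldsymbol{\sigma})=[\sigma_{ij}k(\boldsymbol{h};\alpha_{ij},\nu_{ij})]_{i,j=1}^p$, and using $1/B(\nu_{ij},\mu)=\Gamma(\nu_{ij}+\mu)/(\Gamma(\mu)\Gamma(\nu_{ij}))$, $t^{\nu_{ij}-1}=t^{-1}t^{\nu_{ij}}$, and the self-similarity $k(\boldsymbol{h};\alpha_{ij}/\sqrt{t},\nu_{ij}+\mu)=k(\boldsymbol{h}/\sqrt{t};\alpha_{ij},\nu_{ij}+\mu)$, one obtains
\[
\mathbf{K}(\boldsymbol{h};\boldsymbol{\alpha},\boldsymbol{\nu},\boldsymbol{\sigma})=\int_{0}^{1}\frac{(1-t)^{\mu-1}}{\Gamma(\mu)\,t}\;\big[t^{\nu_{ij}}\big]_{i,j=1}^p\odot\;\mathbf{K}\!\left(\frac{\boldsymbol{h}}{\sqrt{t}};\,\boldsymbol{\alpha},\,\boldsymbol{\nu}+\mu,\,\frac{\Gamma(\boldsymbol{\nu}+\mu)}{\Gamma(\boldsymbol{\nu})}\,\boldsymbol{\sigma}\right)\mathrm{d}t,
\]
with $\odot$ the entrywise product. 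For each fixed $t\in(0,1)$ the three factors of the integrand are individually admissible: the scalar $(1-t)^{\mu-1}/(\Gamma(\mu)t)$ is positive; the constant matrix $[t^{\nu_{ij}}]_{i,j}=\exp\!\big(-(-\log t)\,\boldsymbol{\nu}\big)$ is positive semidefinite, since $-\log t\ge0$ and, by Condition~(1) together with Lemma~\ref{superlemma}, $\boldsymbol{\nu}$ being conditionally negative semidefinite makes $\exp(-s\boldsymbol{\nu})$ positive semidefinite for every $s\ge0$; and $\boldsymbol{h}\mapsto\mathbf{K}(\boldsymbol{h}/\sqrt{t};\boldsymbol{\alpha},\boldsymbol{\nu}+\mu,\Gamma(\boldsymbol{\nu}+\mu)\Gamma(\boldsymbol{\nu})^{-1}\boldsymbol{\sigma})$ is a valid model in $\mathbb{R}^d$ by Condition~(2), a common dilation of the lag preserving validity. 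Since the entrywise product of two valid $p$-variate covariances is valid (Schur's product theorem, a constant positive semidefinite matrix being a trivial valid model) and multiplication by a positive scalar preserves validity, the integrand is, for each $t\in(0,1)$, a valid $p$-variate covariance on $\mathbb{R}^d$.

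To conclude I would check integrability (immediate: $(1-t)^{\mu-1}$ is integrable at $t=1$ because $\mu>0$, and $t^{\nu_{ij}-1}\,|k(\boldsymbol{h};\alpha_{ij}/\sqrt{t},\nu_{ij}+\mu)|\le t^{\nu_{ij}-1}$ is integrable at $t=0$ because $\nu_{ij}>0$), and then use that the cone of valid $p$-variate covariance functions on $\mathbb{R}^d$ is closed under integration against a nonnegative measure: for any finite set of sites, the associated block covariance matrix of the integral is the integral of positive semidefinite block matrices, hence positive semidefinite. This delivers validity of $\mathbf{K}(\cdot;\boldsymbol{\alpha},\boldsymbol{\nu},\boldsymbol{\sigma})$ in $\mathbb{R}^d$. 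The genuine obstacle is establishing the identity \eqref{betaident}; once it is in hand, the remainder only recombines the ``conditionally negative semidefinite $\Rightarrow$ positive semidefinite matrix exponential'' fact (Lemma~\ref{superlemma}), the Schur product theorem, and the closedness of the cone — all standard and already in use in the paper.
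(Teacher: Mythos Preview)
Your proof is correct and follows essentially the same route as the paper's. The paper invokes the integral identity (Gradshteyn--Ryzhik 6.592.12)
\[
\mathbf{K}(\boldsymbol{h};\boldsymbol{\alpha},\boldsymbol{\nu},\boldsymbol{\sigma})=\frac{\Gamma(\boldsymbol{\nu}+\mu)}{\Gamma(\boldsymbol{\nu})}\int_{1}^{+\infty} s^{-\boldsymbol{\nu}-\mu}(s-1)^{\mu-1}\,\mathbf{K}(\boldsymbol{h}\sqrt{s};\boldsymbol{\alpha},\boldsymbol{\nu}+\mu,\boldsymbol{\sigma})\,\mathrm{d}s,
\]
which is exactly your identity \eqref{betaident} after the substitution $s=1/t$ (your prefactor $1/B(\nu,\mu)$ also supplies the $1/\Gamma(\mu)$ that appears to be missing in the paper's displayed formula). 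The remainder of the argument---$[t^{\nu_{ij}}]=\exp(-(-\log t)\boldsymbol{\nu})$ positive semidefinite via Lemma~\ref{superlemma}, Schur product with the valid shifted model from Condition~(2), and closure of the cone under mixtures---is identical in both proofs. Your derivation of the scalar identity via the beta--gamma decomposition $1/U_\nu\overset{d}{=}B_{\nu,\mu}/U_{\nu+\mu}$ is a nice self-contained alternative to citing the integral table.
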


\begin{proof}
The proof relies on the decomposition of the Mat\'ern covariance with smoothness parameter $\boldsymbol{\nu}$ as a scale mixture of Mat\'ern covariances with smoothness parameter $\boldsymbol{\nu}+\mu$ \citep[formula 6.592.12]{Grad}:
\begin{equation}
    {\bf K}(\boldsymbol{h};\boldsymbol{\alpha},\boldsymbol{\nu},\boldsymbol{\sigma}) = \Gamma(\boldsymbol{\nu}+\mu) \, \Gamma(\boldsymbol{\nu})^{-1} \, \int_{1}^{+\infty} t^{-\boldsymbol{\nu}-\mu} \, (t-1)^{\mu-1} {\bf K}(\boldsymbol{h}\sqrt{t};\boldsymbol{\alpha},\boldsymbol{\nu}+\mu,\boldsymbol{\sigma}) \text{d}t,
\end{equation}
where all operations are taken element-wise. Under Conditions (1) and (2) of Lemma~\ref{lemma3}, ${\bf K}(\boldsymbol{h};\boldsymbol{\alpha},\boldsymbol{\nu},\boldsymbol{\sigma})$ appears as a sum of valid multivariate Mat\'ern covariance functions of the form  ${\bf K}(\boldsymbol{h}\sqrt{t};\boldsymbol{\alpha},\boldsymbol{\nu}+\mu,\Gamma(\boldsymbol{\nu}+\mu) \, \Gamma(\boldsymbol{\nu})^{-1} \, \boldsymbol{\sigma})$ weighted by positive semidefinite matrices of the form $t^{-\boldsymbol{\nu}-\mu} \, (t-1)^{\mu-1}$ (Lemma~\ref{superlemma}), hence, it is a valid covariance model.
\end{proof}

\begin{cor}
\label{corollary3}
If ${\bf K}({\boldsymbol{h}};\boldsymbol{\alpha},\nu \boldsymbol{1},\boldsymbol{\sigma})$ is a valid model in $\mathbb{R}^d$ for some $\nu>0$, then so is ${\bf K}({\boldsymbol{h}};\boldsymbol{\alpha},\mu \boldsymbol{1},\boldsymbol{\sigma})$ for $\mu \in (0,\nu)$.
\end{cor}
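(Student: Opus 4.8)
The plan is to obtain the corollary as an immediate specialization of Lemma~\ref{lemma3}. The point is that the constant smoothness matrices $\nu\boldsymbol{1}$ and $\mu\boldsymbol{1}$ differ by the constant increment $\nu-\mu$, which is strictly positive precisely because $\mu\in(0,\nu)$; this is exactly the scale-mixture structure that Lemma~\ref{lemma3} exploits, so the whole argument reduces to verifying its two hypotheses.

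First I would apply Lemma~\ref{lemma3} with base smoothness matrix $\mu\boldsymbol{1}$ and increment constant $\nu-\mu>0$. Condition~(1) of Lemma~\ref{lemma3} asks that $\mu\boldsymbol{1}$ be conditionally negative semidefinite: since the all-ones matrix $\boldsymbol{1}$ is conditionally negative semidefinite and $\mu>0$, the matrix $\mu\boldsymbol{1}$ is too, by closure of the conditionally negative semidefinite cone under multiplication by a nonnegative scalar (recalled in Section~\ref{Background}). For Condition~(2), note that $\mu\boldsymbol{1}+(\nu-\mu)=\nu\boldsymbol{1}$ entry-wise, and that $\Gamma(\mu\boldsymbol{1}+(\nu-\mu))\,\Gamma(\mu\boldsymbol{1})^{-1}=(\Gamma(\nu)/\Gamma(\mu))\,\boldsymbol{1}$, again entry-wise; hence the model to be checked in Condition~(2) is ${\bf K}(\boldsymbol{h};\boldsymbol{\alpha},\nu\boldsymbol{1},(\Gamma(\nu)/\Gamma(\mu))\,\boldsymbol{\sigma})$. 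Because $\Gamma(\nu)/\Gamma(\mu)$ is a strictly positive scalar, this is just ${\bf K}(\boldsymbol{h};\boldsymbol{\alpha},\nu\boldsymbol{1},\boldsymbol{\sigma})$ scaled by a positive constant, hence valid by the hypothesis of the corollary (the set of valid matrix-valued covariances being a cone). Both conditions of Lemma~\ref{lemma3} thus hold, and the lemma yields the validity of ${\bf K}(\boldsymbol{h};\boldsymbol{\alpha},\mu\boldsymbol{1},\boldsymbol{\sigma})$ in $\mathbb{R}^d$.

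There is essentially no real obstacle here; the only places needing a word of care are the notational clash between the increment named $\mu$ inside Lemma~\ref{lemma3} and the target smoothness named $\mu$ in the corollary, and the harmless remark that premultiplying $\boldsymbol{\sigma}$ entry-wise by the positive scalar $\Gamma(\nu)/\Gamma(\mu)$ does not affect positive semidefiniteness of the associated matrix-valued covariance. Everything else is bookkeeping with the element-wise conventions.
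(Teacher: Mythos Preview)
Your proposal is correct and follows exactly the paper's approach: the corollary is deduced from Lemma~\ref{lemma3} by taking the base smoothness matrix to be $\mu\boldsymbol{1}$ (trivially conditionally negative semidefinite) and the increment to be $\nu-\mu>0$. Your write-up is in fact more complete than the paper's one-line proof, since you also verify Condition~(2) by absorbing the positive scalar $\Gamma(\nu)/\Gamma(\mu)$ into $\boldsymbol{\sigma}$.
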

\begin{proof}
This result stems from the fact that the constant matrix $\nu \boldsymbol{1}$ is conditionally negative semidefinite, hence it satisfies Condition (1) of Lemma~\ref{lemma3}.
\end{proof}

\begin{lem}
\label{validityGaussian}
The multivariate Gaussian model \eqref{multi_gauss} is valid in $\mathbb{R}^d$, $d \geq 1$, if \begin{itemize}
    \item [(1)] $\boldsymbol{\beta}^{-1}$ is conditionally negative semidefinite;
    \item [(2)] $\boldsymbol{\sigma} \, \boldsymbol{\beta}^{-d/2}$ is positive semidefinite.
\end{itemize}
\end{lem}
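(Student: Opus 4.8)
\textbf{Proof plan for Lemma~\ref{validityGaussian}.}
The plan is to reduce the multivariate Gaussian model to a problem about positive semidefiniteness of a spectral density matrix, exactly as one does for the Mat\'ern model via \eqref{spectral_biv}. Recall from \eqref{spectraldensitygauss} that the univariate Gaussian covariance $g(\cdot;\beta)$ has isotropic spectral density $\widetilde{g}(\boldsymbol{\omega};\beta) = (4\pi\beta)^{-d/2}\exp(-\|\boldsymbol{\omega}\|^2/(4\beta))$. By the same necessary-and-sufficient criterion invoked for \eqref{multi_cov} (positive semidefiniteness of the spectral density matrix at every frequency), the model \eqref{multi_gauss} is valid in $\mathbb{R}^d$ if and only if, for every fixed $\boldsymbol{\omega}\in\mathbb{R}^d$, the matrix $\bigl[\sigma_{ij}\,(4\pi\beta_{ij})^{-d/2}\exp(-\|\boldsymbol{\omega}\|^2/(4\beta_{ij}))\bigr]_{i,j=1}^p$ is positive semidefinite.

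First I would factor out the frequency-independent constant $(4\pi)^{-d/2}$, which does not affect positive semidefiniteness, reducing the task to showing that $\bigl[\sigma_{ij}\,\beta_{ij}^{-d/2}\exp(-\|\boldsymbol{\omega}\|^2/(4\beta_{ij}))\bigr]_{i,j=1}^p \ge 0$ for all $\boldsymbol{\omega}$. Next I would write this matrix as the Schur (Hadamard, i.e.\ element-wise) product of two matrices: $\boldsymbol{\sigma}\,\boldsymbol{\beta}^{-d/2}$ on the one hand, and $\bigl[\exp(-\|\boldsymbol{\omega}\|^2\,\beta_{ij}^{-1}/4)\bigr]_{i,j=1}^p$ on the other. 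By Condition~(2), the first factor is positive semidefinite. For the second factor, fix $\boldsymbol{\omega}$ and set $t=\|\boldsymbol{\omega}\|^2/4 \ge 0$; then this matrix is exactly $\exp(-t\,\boldsymbol{\beta}^{-1})$ in the element-wise sense. By Condition~(1), $\boldsymbol{\beta}^{-1}$ is conditionally negative semidefinite, so Lemma~\ref{superlemma} gives that $\exp(-t\,\boldsymbol{\beta}^{-1})$ is positive semidefinite for every $t\ge 0$. The Schur product theorem then yields that the spectral density matrix is positive semidefinite at every $\boldsymbol{\omega}$, which proves validity.

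The only point requiring a little care is the reduction step: I should note that $\beta_{ij}>0$ by assumption, so $\boldsymbol{\beta}^{-1}$ and $\boldsymbol{\beta}^{-d/2}$ are well defined with positive entries, and that the spectral-density criterion cited for \eqref{multi_cov} applies verbatim to \eqref{multi_gauss} since the Gaussian covariance is integrable with a strictly positive, bounded spectral density. Alternatively, and perhaps more cleanly given the tools already assembled, I could obtain the result as a limiting case of Example~1 of Theorem~\ref{validityMatern} using the convergence \eqref{convergencetogaussian}: replacing $\boldsymbol{\alpha}$ by $2\sqrt{\nu\,\boldsymbol{\beta}}$ in the parsimonious Mat\'ern model and letting $\nu\to+\infty$, Condition~(1) of Example~1 ($\boldsymbol{\alpha}^{-2}=\tfrac{1}{4\nu}\boldsymbol{\beta}^{-1}$ conditionally negative semidefinite) and Condition~(2) ($\boldsymbol{\sigma}\,\boldsymbol{\alpha}^{-d}=(4\nu)^{-d/2}\boldsymbol{\sigma}\,\boldsymbol{\beta}^{-d/2}$ positive semidefinite) become precisely Conditions~(1) and~(2) of the present lemma, and the cone of positive semidefinite matrices is closed under the pointwise limit. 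I would present the direct spectral argument as the main proof since it is self-contained and short, with the limiting argument as a remark. The main obstacle is essentially bookkeeping: making sure the element-wise exponential $\exp(-t\boldsymbol{\beta}^{-1})$ in Lemma~\ref{superlemma} is identified correctly with the frequency-dependent Gaussian factor, and that the constant prefactor is handled so that Condition~(2) enters cleanly.
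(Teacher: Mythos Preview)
Your main argument is correct and is essentially identical to the paper's own proof: write the spectral density matrix as the Schur product of $\boldsymbol{\sigma}\,\boldsymbol{\beta}^{-d/2}$ and $\exp(-t\,\boldsymbol{\beta}^{-1})$, invoke Condition~(2) for the first factor and Condition~(1) with Lemma~\ref{superlemma} for the second, then apply the Schur product theorem. One caution on your alternative limiting route: in the paper's logical order, Lemma~\ref{validityGaussian} is used to prove Theorem~\ref{validitymatern20}B, which in turn yields Theorem~\ref{validityMatern} and Example~1, so deriving the lemma from Example~1 would be circular; keep that only as an informal consistency check, not as a proof.
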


\begin{proof}
The matrix-valued spectral density of the $p$-variate Gaussian covariance \eqref{multi_gauss} in $\mathbb{R}^d$ is (Eq. \eqref{spectraldensitygauss})
\begin{equation}
\label{demogauss}
\widetilde{{\bf G}}(\boldsymbol{\omega};\boldsymbol{\beta},\boldsymbol{\sigma}) = \frac{1}{(4\pi)^{d/2}} \left[\boldsymbol{\sigma} \, \boldsymbol{\beta}^{-d/2}\right] \, \left[\exp\left(-\frac{\| \boldsymbol{\omega} \| ^2}{4\boldsymbol{\beta}}\right)\right], \quad \boldsymbol{\omega} \in \mathbb{R}^d.
\end{equation}
Condition (1) and Lemma~\ref{superlemma} ensure that the last matrix into brackets in the right-hand side member of \eqref{demogauss} is positive semidefinite, while Condition (2) ensures the positive semidefiniteness of the first matrix into brackets. Accordingly, based on Schur's product theorem, the spectral density matrix $\widetilde{{\bf G}}(\boldsymbol{\omega};\boldsymbol{\beta},\boldsymbol{\sigma})$ is positive semidefinite for any frequency vector $\boldsymbol{\omega}$ in $\mathbb{R}^d$, which in turn ensures that the associated covariance model is valid.
\end{proof}

\begin{remark}
A necessary condition for the element-wise inverse matrix $\boldsymbol{\beta}^{-1}$ to be conditionally negative semidefinite (Condition (1) in Lemma~\ref{validityGaussian}) is that $\boldsymbol{\beta}$ is positive semidefinite (Lemma~\ref{lemma2}). Likewise, a necessary condition for Condition (2) to hold is that $\boldsymbol{\sigma}$ is positive semidefinite, insofar as it corresponds to the covariance matrix between the components of the $p$-variate random field at the same spatial location.
\end{remark}

\section{Transitive Upgrading (Mont\'ee) of a Covariance Function}
\label{montee}

The transitive upgrading, also known as ``mont\'ee'' or Radon transform \citep{Matheron1965}, of order $1$ of an absolutely integrable real-valued function $\varphi_d$ defined in $\mathbb{R}^d$ is the function $\varphi_{d-1}$ of $\mathbb{R}^{d-1}$ obtained by integration of $\varphi_d$ along the last coordinate axis:
\begin{equation}
\varphi_{d-1}(s_1,\cdots,s_{d-1}) = \int_{-\infty}^{+\infty} \varphi_{d}(s_1,\cdots,s_{d-1},s_d) \text{d} s_d, \quad (s_1,\cdots,s_{d-1}) \in \mathbb{R}^{d-1}.\\
\end{equation}

The Fourier transforms of $\varphi_d$ and its mont\'ee are related by
\begin{equation}
\widetilde{\varphi}_{d-1}(\omega_1,\cdots,\omega_{d-1})=\widetilde{\varphi}_d(\omega_1,\cdots,\omega_{d-1},0), \quad (\omega_1,\cdots,\omega_{d-1}) \in \mathbb{R}^{d-1}.
\end{equation}

The same relation applies to the Fourier transforms of continuous and absolutely integrable covariance functions: the mont\'ee operator consists in cancelling out the last coordinate of the Fourier transform (spectral density) of the $d$-dimensional covariance to find out that of the $(d-1)$-dimensional covariance. In the isotropic setting where the covariance and its spectral density are radial functions, the mont\'ee leaves unchanged the radial part (a function of the distance to the origin), {\em i.e.}, the mont\'ee of an isotropic covariance defined in $\mathbb{R}^d$ is an isotropic covariance in $\mathbb{R}^{d-1}$ whose spectral density has the same radial part as that of the original covariance \citep{Matheron1965}. For example, the mont\'ee of the isotropic Mat\'ern covariance $k(\cdot ; \alpha,\nu)$ as in Equation \eqref{matern} in $\mathbb{R}^d$ is, up to a multiplicative factor, the Mat\'ern covariance $k(\cdot ; \alpha,\nu+1/2)$ in $\mathbb{R}^{d-1}$.

By repeated application of the mont\'ee of order $1$, one obtains a mont\'ee of any integer order. In particular, the mont\'ee of integer order $\vartheta$ of the isotropic Mat\'ern covariance $k(\cdot ; \alpha,\nu)$ in $\mathbb{R}^d$ is proportional to the Mat\'ern covariance $k(\cdot ; \alpha,\nu+\vartheta/2)$ in $\mathbb{R}^{d-\vartheta}$, the spectral densities of both Mat\'ern covariances having, up to a multiplicative factor, the same radial parts (Eq. \eqref{spectraldensitymatern}). By using the formalism of Hankel transforms instead of that of Fourier transforms, it is possible to generalize the mont\'ee to fractional orders \citep{Matheron1965}.

\section{Proofs}
\label{app3}

\begin{proof}[Proof of Theorem~\ref{validityexponentialdD}]
Owing to Corollary~\ref{corollary3}, it suffices to establish the result for $\nu$ being an integer or a half-integer. We propose a proof based on the representation of the exponential and Mat\'ern covariance as scale mixtures of compactly-supported Askey and Wendland covariances, respectively. The Wendland covariance with range $a>0$ and smoothness parameter $m \in \mathbb{N}$ and its spectral density in $\mathbb{R}^d$ are \citep{Chernih}:
\begin{equation}
\label{wendland1}
w({\boldsymbol{h}};a,m,d) = \frac{2^{-\nu-m}\Gamma(\mu+1)}{ \Gamma(\mu+m+1)}\left(1-\frac{\|\mathbf{h}\|^2}{a^2}\right)_{+}^{\mu+m} \, {}_2F_1\left(\frac{\mu}{2},\frac{\mu+1}{2};\mu+m+1;1-\frac{\|\mathbf{h}\|^2}{a^2}\right),  \quad \boldsymbol{h} \in \mathbb{R}^d
\end{equation}
\begin{equation}
\label{wendland2}
\begin{split}
 \widetilde{w} & (\boldsymbol{\omega}; a,m,d) =  \frac{2^m \, a^{d} \Gamma(\mu+1)\Gamma\left(\frac{1+d+2m}{2}\right)}{\pi^{(d+1)/2}\Gamma(\mu+d+2m+1)} \\
 & \times {}_1F_2\left(\frac{1+d+2m}{2}; \frac{1+d+2m+\mu}{2}, 1+\frac{d+2m+\mu}{2}; -\left(\frac{a\|\boldsymbol{\omega}\|}{2}\right)^2\right) , \quad \boldsymbol{\omega} \in \mathbb{R}^{d},
\end{split}
\end{equation}
with $\mu=
\left\lfloor \frac{d}{2}+m+1 \right\rfloor$, $(\cdot)_+$ the positive part function, ${}_2F_1$ the Gauss hypergeometric function, and ${}_1F_2$ a generalized hypergeometric function. The Askey covariance corresponds to the particular case when $m=0$.

Suppose now that $\nu$ is an integer or a half-integer and let us decompose the univariate exponential covariance in $\mathbb{R}^{d+2\nu-1}$, $d \geq 1$, as a scale mixture of Askey covariances in $\mathbb{R}^{d+2\nu-1}$ with exponent $\mu=\left\lfloor \frac{d+1}{2}+\nu \right\rfloor$:
\begin{equation}
\label{wendland3}
    k({\boldsymbol{h}};\alpha,1/2) = \int_{0}^{+\infty} w({\boldsymbol{h}};t,0,d+2\nu-1) \phi(t; \alpha,\nu,d) \text{d}t, \quad \boldsymbol{h} \in \mathbb{R}^{d+2\nu-1},
\end{equation}
that is,
\begin{equation}
\label{wendland4}
    \exp(-\alpha \|\boldsymbol{h}\|) = \int_{\|\boldsymbol{h}\|}^{+\infty} \left( 1 - \frac{\|\boldsymbol{h}\|}{t}\right)^\mu \phi(t;\alpha,\nu,d) \text{d}t, \quad \boldsymbol{h} \in \mathbb{R}^{d+2\nu-1}.
\end{equation}
To determine $\phi$, we differentiate $(\mu+1)$ times under the integral sign, which is permissible owing to the dominated convergence theorem, to obtain
\begin{equation}
\label{wendland5}
    \phi(t;\alpha,\nu,d) = \frac{\alpha^{\mu+1} \, t^\mu}{\Gamma(\mu+1)} \exp(-\alpha \, t), \quad t>0.
\end{equation}
One recognizes the gamma probability density with shape parameter $\mu+1$ and rate parameter $\alpha$, a result that reminds of the representation in $\mathbb{R}^3$ of the exponential covariance as a gamma mixture of spherical covariances \citep{EmeryLantu2006}. In terms of spectral density, \eqref{wendland5} translates into
\begin{equation}
\label{wendland6}
    \widetilde{k}(\boldsymbol{\omega}; \alpha,1/2) = \int_{0}^{+\infty} \widetilde{w}(\boldsymbol{\omega}; t,0,d+2\nu-1) \, \phi(t;\alpha,\nu,d) \text{d}t, \quad \boldsymbol{\omega} \in \mathbb{R}^{d+2\nu-1}.
\end{equation}

To generalize these results to the Mat\'ern covariance of integer or half-integer parameter $\nu$, let us consider a transitive upgrading (``mont\'ee'') of order $2\nu-1$, which provides a covariance model with the same radial spectral density in a space whose dimension is reduced by $2\nu-1$ (Appendix~\ref{montee}), {\em i.e.}, in $\mathbb{R}^d$. In particular, up to a multiplicative constant, the upgrading of $k(\cdot;\alpha,1/2)$ and $w(\cdot;t,0,d+2\nu-1)$, both covariances being defined in $\mathbb{R}^{d+2\nu-1}$, yields $k(\cdot;\alpha,\nu)$ and $w(\cdot;t,\nu-1/2,d)$, both defined in $\mathbb{R}^{d}$, respectively. Based on \eqref{wendland6}, the upgraded spectral densities in $\mathbb{R}^{d}$ are found to be related by
\begin{equation}
\label{wendland7}
    \frac{\Gamma(\nu)}{\alpha^{2\nu-1} \, \Gamma(1/2)} \widetilde{k}(\boldsymbol{\omega}; \alpha,\nu) = \int_{0}^{+\infty} \frac{t^{2\nu-1}}{2^{\nu-1/2}} \, \widetilde{w}(\boldsymbol{\omega}; t,\nu-1/2,d) \phi(t;\alpha,\nu,d) \text{d}t, \quad \boldsymbol{\omega} \in \mathbb{R}^{d},
\end{equation}
and the covariance functions therefore satisfy the following identity:
\begin{equation}
\label{wendland8}
    \frac{\Gamma(\nu)}{\alpha^{2\nu-1} \, \Gamma(1/2)} {k}(\boldsymbol{h}; \alpha,\nu) = \int_{0}^{+\infty} \frac{t^{2\nu-1}}{2^{\nu-1/2}} \, {w}(\boldsymbol{h}; t,\nu-1/2,d) \phi(t;\alpha,\nu,d) \text{d}t, \quad \boldsymbol{h} \in \mathbb{R}^{d}.\\
\end{equation}
Based on this scale mixture representation, the multivariate Matérn covariance in $\mathbb{R}^d$ with parameters $\boldsymbol{\alpha}$, $\nu \boldsymbol{1}$ and $\boldsymbol{\sigma}$ can be written as
\begin{equation}
\label{wendland9}
{\bf K}({\boldsymbol{h}};\boldsymbol{\alpha},\nu \boldsymbol{1},\boldsymbol{\sigma}) = \boldsymbol{\sigma} \, \int_{0}^{+\infty} \frac{\Gamma(1/2) (\boldsymbol{\alpha}\,t)^{2\nu+\mu} }{{2}^{\nu-1/2} \, \Gamma(\nu)\,\Gamma(\mu+1)\, t} \, \exp(-\boldsymbol{\alpha} t) \, w(\boldsymbol{h};t,\nu-1/2,d) \text{d}t, \quad \boldsymbol{h} \in \mathbb{R}^{d},
\end{equation}
where the products, exponent, exponential and integration are taken element-wise. Under Condition (A.1) of Theorem~\ref{validityexponentialdD}, $\exp(-t \boldsymbol{\alpha})$ is positive semidefinite (Lemma~\ref{superlemma}). If, furthermore, Condition (A.2) also holds, then $\boldsymbol{\sigma} \, \boldsymbol{\alpha}^{2\nu+\mu} \, \exp(-\boldsymbol{\alpha} t)$ is positive semidefinite for any $t>0$, as the element-wise product of positive semidefinite matrices. The matrix-valued Mat\'ern covariance function ${\bf K}({\boldsymbol{h}};\boldsymbol{\alpha},\nu \boldsymbol{1},\boldsymbol{\sigma})$ appears as a mixture of valid Wendland covariances in $\mathbb{R}^{d}$ weighted by positive semidefinite matrices, thus it is a valid model in $\mathbb{R}^{d}$, which completes the proof.
\end{proof}

\begin{proof}[Proof of Theorem~\ref{validitymatern20}]
We first prove (A).  The matrix-valued spectral density of the multivariate Mat\'ern covariance ${\bf K}({\boldsymbol{h}}; \boldsymbol{\alpha}, \boldsymbol{\nu},\boldsymbol{\sigma})$ in $\mathbb{R}^d$ is (Eq. \eqref{spectraldensitymatern})
\begin{equation}
\label{demospecmatern4}
\widetilde{{\bf K}}(\boldsymbol{\omega}; \boldsymbol{\alpha}, \boldsymbol{\nu},\boldsymbol{\sigma}) = \left[ \frac{{\boldsymbol{\sigma}} \, \Gamma(\boldsymbol{\nu}+d/2)}{\pi^{d/2} \,\boldsymbol{\alpha}^{d} \, \Gamma(\boldsymbol{\nu})} \right] \, \left[\exp\left(-\left(\boldsymbol{\nu}+\frac{d}{2}\right) \ln \left(1+\frac{\| \boldsymbol{\omega} \| ^2}{\boldsymbol{\alpha}^2}\right)\right)\right],
\quad \boldsymbol{\omega} \in \mathbb{R}^d.
\end{equation}

Since the composition of two Bernstein functions is still a Bernstein function and based on the fact that $x \mapsto \ln(1+x)$ is a Bernstein function \citep[corollary 3.8 and chapter 16.4]{Schilling}, the matrix
\begin{equation*}
    \left[\left(\boldsymbol{\nu}+\frac{d}{2}\right) \ln \left(1+\frac{\| \boldsymbol{\omega} \| ^2}{\boldsymbol{\alpha}^2}\right)\right]
\end{equation*}
turns out to be the element-wise product of two Bernstein matrices with the same supporting points under Conditions (A.1) and (A.2) of Theorem~\ref{validitymatern20}, hence it is conditionally negative semidefinite (see example following \eqref{vario}). If Condition (A.3) also holds, then the spectral density matrix $\widetilde{{\bf K}}(\boldsymbol{\omega}; \boldsymbol{\alpha}, \boldsymbol{\nu},\boldsymbol{\sigma})$ is positive semidefinite for any $\boldsymbol{\omega} \in \mathbb{R}^d$, based on Lemma~\ref{superlemma} and Schur's product theorem, and ${\bf K}(\boldsymbol{h}; \boldsymbol{\alpha}, \boldsymbol{\nu},\boldsymbol{\sigma})$ is therefore a valid covariance function in $\mathbb{R}^d$.  \\\\
We now prove (B).
Consider the $p$-variate Mat\'ern covariance \eqref{multi_cov}, in which each entry is a scale mixture of Gaussian covariances of the form \eqref{integral_repr}. Let $\boldsymbol{\psi}=[\psi_{ij}]_{i,j=1}^p$ be a matrix with positive entries. The change of variable $u = \psi^{-1}_{ij} v$ in the scale mixture representation of the cross-covariance $k_{ij}(\boldsymbol{h})$ yields the following expression for the $p$-variate Mat\'ern covariance model, where the exponent $-1$ indicates the element-wise inverse:
\begin{equation}
 \label{integral_repr3}
{\bf K}({\boldsymbol{h}};\boldsymbol{\alpha},\boldsymbol{\nu},\boldsymbol{\sigma}) = \int_{0}^{+\infty}   g(\boldsymbol{h}; \boldsymbol{\psi}^{-1} v) \, \boldsymbol{\sigma} \, {\bf F}(\boldsymbol{\psi}^{-1} v; \boldsymbol{\alpha},\boldsymbol{\nu}) \, \boldsymbol{\psi}^{-1}  \text{d}v.
\end{equation}
Based on Lemma~\ref{validityGaussian}, sufficient conditions for this model to be valid in $\mathbb{R}^d$ are
\begin{itemize}
    \item [(1)] $\boldsymbol{\psi}$ is conditionally negative semidefinite;
    \item [(2)] $\boldsymbol{R}(v)= \boldsymbol{\sigma} {\bf F}(\boldsymbol{\psi}^{-1} v; \boldsymbol{\alpha}, \boldsymbol{\nu})  \boldsymbol{\psi}^{-1} (\boldsymbol{\psi}^{-1} v)^{-d/2}$ is positive semidefinite for any $v \geq 0$.
\end{itemize}
The latter matrix can be rewritten as
\begin{align*}
\begin{split}
\boldsymbol{R}(v) & = \frac{1}{v^{1+d/2}} \,  \frac{\boldsymbol{\sigma} \boldsymbol{\psi}^{d/2}}{ \Gamma(\boldsymbol{\nu})} \left( \frac{\boldsymbol{\alpha}^2 \, \boldsymbol{\psi}}{4 v} \right)^{\boldsymbol{\nu}}  \exp\left(-\frac{\boldsymbol{\alpha}^2\, \boldsymbol{\psi}}{4 v}\right) \\
&=  (4t(v))^{1+d/2} \,  \frac{\boldsymbol{\sigma} \boldsymbol{\psi}^{d/2}}{ \Gamma(\boldsymbol{\nu})} \left( t(v) \boldsymbol{\alpha}^2\, \boldsymbol{\psi} \right)^{\boldsymbol{\nu}}  \exp\left(-t(v) \boldsymbol{\alpha}^2\, \boldsymbol{\psi}\right),
\end{split}
\end{align*}
with $t(v) = \frac{1}{4v} > 0$. To prove that, under Conditions (B.2) to (B.4) of Theorem~\ref{validitymatern20}, this matrix is positive semidefinite for any positive value of $t(v)$, we distinguish two cases.\\
\begin{itemize}
    \item $t(v) \leq 1$.
Up to a positive scalar factor, $\boldsymbol{R}(v)$ can be decomposed as follows:
\begin{equation}
\label{tlessthan1}
\begin{split}
 \boldsymbol{R}(v) &=  (4t(v))^{1+d/2} \left[\exp((\ln t(v)+1-t(v)) \boldsymbol{\nu})\right] \\ & \quad \left[\exp\left(-t(v) \left(\boldsymbol{\alpha}^2\, \boldsymbol{\psi}-\boldsymbol{\nu}\right)\right) \right]
 \, \left[ \frac{\boldsymbol{\sigma} \boldsymbol{\psi}^{d/2}}{\Gamma(\boldsymbol{\nu})} \left(\boldsymbol{\alpha}^{2}\, \boldsymbol{\psi}\right)^{\boldsymbol{\nu}} \exp(-\boldsymbol{\nu}) \right],
\end{split}
\end{equation}
with $\ln t(v) + 1- t(v) < 0$ and $-t(v) < 0$. Together with Lemma~\ref{superlemma}, Conditions (B.2), (B.3) and (B.4) of Theorem~\ref{validitymatern20} ensure the positive semidefiniteness of the first, second and third matrices into brackets in the second member of \eqref{tlessthan1}, respectively. Accordingly, $\boldsymbol{R}(v)$ is positive semidefinite based on Schur's product theorem.\\

    \item $t(v) > 1$. One has the following decomposition:
\begin{equation}
\label{tmorethan1}
\begin{split}
  \boldsymbol{R}(v) =  &(4t(v))^{1+d/2} \, \left[\exp\left(-\left(\ln t(v) + 1\right) \, \left(\boldsymbol{\alpha}^2\, \boldsymbol{\psi}- {\boldsymbol{\nu}}\right)\right)\right] \\
    & \, \left[\exp \left(\left(\ln t(v) +1- t(v)\right) \, \boldsymbol{\alpha}^2\, \boldsymbol{\psi}\right)\right] \, \left[ \frac{\boldsymbol{\sigma} \boldsymbol{\psi}^{d/2}}{\Gamma(\boldsymbol{\nu})} \left(\boldsymbol{\alpha}^{2}\, \boldsymbol{\psi}\right)^{\boldsymbol{\nu}} \exp(-\boldsymbol{\nu}) \right],
\end{split}
\end{equation}
with $-(\ln t(v) + 1)<0$ and $\ln t(v)+ 1- t(v)<0$. The positive semidefiniteness of $\boldsymbol{R}(v)$ follows from Conditions (B.2), (B.3) and (B.4) of Theorem~\ref{validitymatern20}, Lemma~\ref{superlemma} and Schur's product theorem. In particular, one uses the fact that Conditions (B.2) and (B.3) imply the conditional negative semidefiniteness of $\boldsymbol{\alpha}^2\,\boldsymbol{\psi}$.\end{itemize}
\end{proof}

\begin{proof}[Proof of Theorem~\ref{validityMatern}]
Conditions (A) and (B) are particular cases of Theorem~\ref{validitymatern20}B, with $\boldsymbol{\psi}=\boldsymbol{\nu} \, \boldsymbol{\alpha}^{-2}$ and $\boldsymbol{\psi}=\boldsymbol{1}/\beta$, respectively.

\end{proof}

\renewcommand*{\thesection}{\Alph{section}}


\bibliographystyle{apalike}
\bibliography{mybib}

\end{document}